\documentclass[runningheads]{llncs}
\usepackage[T1]{fontenc}
\usepackage{amsmath,amssymb,amstext}
\usepackage{xspace}
\usepackage{enumerate}
\usepackage{color}
\usepackage{bbding}

\spnewtheorem{thm}[theorem]{Theorem}{\bfseries}{\slshape}
\spnewtheorem{dfn}[theorem]{Definition}{\bfseries}{\rmfamily}
\spnewtheorem{lem}[theorem]{Lemma}{\bfseries}{\slshape}
\spnewtheorem{conv}[theorem]{Convention}{\bfseries}{\slshape}
\spnewtheorem{cor}[theorem]{Corollary}{\bfseries}{\slshape}
\spnewtheorem{infer}[theorem]{Inference System}{\bfseries}{\rmfamily}
\spnewtheorem{exam}[theorem]{Example}{\bfseries}{\rmfamily}

\ifx \@ifpdf \@empty
        \newif\ifpdf
        \ifx\pdfoutput\undefined\pdffalse\else\pdftrue\fi
\else
        \usepackage{ifpdf}
\fi
\ifpdf
  \RequirePackage[pdftex]{graphicx}
\else
  \RequirePackage[dvips]{graphicx}
\fi

\newcommand{\acr}[1]{\scalebox{.955}{#1}}

\newcommand{\wwo}[2]{\relax\setbox0=\hbox{#1\relax}%
  \setbox1=\hbox{#2\relax}%
  \kern0.5\wd0 \kern-0.5\wd1 \hbox{#2\relax}\kern-0.5\wd1 \kern0.5\wd0}

\newcommand{\mgu}{\mathrm{mgu}}
\newcommand{\eq}{\approx}
\newcommand{\noteq}{\not\approx}

\newcommand{\ulss}{\mathrm{ss}}
\newcommand{\lss}{\mathrm{lss}}
\newcommand{\ults}{\mathrm{ts}}
\newcommand{\lts}{\mathrm{lts}}
\newcommand{\rrm}{\mathrm{rm}}
\newcommand{\nm}{\mathrm{nm}}
\newcommand{\ngt}{\mathrel{{\succ}\!\!\!{\succ}}}
\newcommand{\nlt}{\mathrel{{\prec}\!\!\!{\prec}}}
\newcommand{\succL}{\succ_{\scriptscriptstyle\mathrm{L}}}
\newcommand{\succC}{\succ_{\scriptscriptstyle\mathrm{C}}}
\newcommand{\succCL}{\succ_{\scriptscriptstyle\mathrm{Clo}}}

\newcommand{\precC}{\prec_{\scriptscriptstyle\mathrm{C}}}

\newcommand{\preceqC}{\preceq_{\scriptscriptstyle\mathrm{C}}}

\newcommand{\mul}{\mathrm{mul}}
\newcommand{\lex}{\mathrm{lex}}
\newcommand{\llor}{\;\lor\;}
\newcommand{\concl}{\mathrm{concl}}
\newcommand{\vars}{\mathrm{vars}}

\newcommand{\ie}{i.e.}

\newcommand{\wrt}{w.r.t.}
\newcommand{\DER}{\acr{DER}}
\newcommand{\Red}{\mathit{Red}}
\newcommand{\RedI}{\Red_{\mathrm{I}}}
\newcommand{\RedC}{\Red_{\mathrm{C}}}
\newcommand{\gnd}{{\mathcal{G}}}

\newcommand{\looM}{\looseness=-1}

\bibliographystyle{abbrv}
\pagestyle{plain}

\begin{document}
\title{On the (In-)Completeness of Destructive Equality Resolution in the Superposition Calculus\\[1.5ex]
{\large Technical Report}}
\author{Uwe Waldmann\orcidID{0000-0002-0676-7195}}
\institute{MPI for Informatics, Saarland Informatics Campus,
  Saarbr\"ucken, Germany \\
  \email{uwe@mpi-inf.mpg.de}}

\maketitle

\begin{abstract}
Bachmair's and Ganzinger's abstract redundancy concept for the
Superposition Calculus justifies almost all operations that are
used in superposition provers to delete or simplify clauses,
and thus to keep the clause set manageable.
Typical examples are tautology deletion, subsumption deletion,
and demodulation,
and with a more refined definition of redundancy
joinability and connectedness can be covered as well.
The notable exception is \textsc{Destructive Equality Resolution},
that is,
the replacement of a clause $x \not\approx t \lor C$
with $x \notin \vars(t)$ by $C\{x \mapsto t\}$.
This operation is implemented in state-of-the-art provers,
and it is clearly useful in practice,
but little is known about how it affects refutational completeness.
We demonstrate on the one hand that the naive addition of
\textsc{Destructive Equality Resolution} to the standard abstract redundancy
concept renders the calculus refutationally incomplete.
On the other hand, we present several restricted variants of
the Superposition Calculus that are refutationally complete even with
\textsc{Destructive Equality Resolution}.
\end{abstract}

\section{Introduction}
\label{sect:intro}%

Bachmair's and Ganzinger's
Superposition Calculus~\cite{BachmairGanzinger1994}
comes with an abstract redundancy concept
that describes under which circumstances clauses can be simplified away
or deleted during a saturation
without destroying the refutational completeness of
the calculus.
Typical concrete simplification and deletion techniques
that are justified by the abstract redundancy concept
are tautology deletion, subsumption deletion,
and demodulation, and with a more refined definition of
redundancy~(Duarte and Korovin \cite{DuarteKorovin2022IJCAR})
joinability and connectedness can be covered as well.

There is one simplification technique left that is
not justified by Bachmair's and Ganzinger's redundancy criterion,
namely \textsc{Destructive Equality Resolution} (\DER{}), that is,
the replacement of a clause $x \not\approx t \lor C$
with $x \notin \vars(t)$ by $C\{x \mapsto t\}$.
This operation is for instance implemented in the E prover
(Schulz \cite{Schulz2002}),
and it has been shown to be useful in practice:
It increases the number of problems that E can solve
and it also reduces E's runtime per solved problem.
The question how it affects the refutational completeness
of the calculus, both in theory and in practice,
has been open, though (except for the special case that
$t$ is also a variable, where \DER{} is equivalent
to selecting the literal $x \noteq t$ so that
\textsc{Equality Resolution} becomes the only possible inference with this clause).

In this paper we demonstrate on the one hand that the naive addition of
\DER{} to the standard abstract redundancy
concept renders the calculus refutationally incomplete.
On the other hand, we present several restricted variants of
the Superposition Calculus that are refutationally complete even with
\DER{}.

A shorter version of this report will appear in the proceedings
of \acr{IJCAR} 2024~\cite{Waldmann-DER-2024IJCAR}.

\section{Preliminaries}
\label{sect:prelim}%

\subsubsection{Basic Notions.}

We refer to (Baader and Nipkow \cite{BaaderNipkow1998})
for basic notations and results
on orderings, multiset operations, and term rewriting.

We use standard set operation symbols like $\cup$ and $\in$
and curly braces
also for finite multisets.
The union $S \cup S'$ of the multisets $S$ and $S'$ over some set $M$
is defined by $(S \cup S')(x) = S(x) + S'(x)$ for every $x \in M$.
The multiset $S$ is a submultiset of $S'$ if $S(x) \leq S'(x)$
for every $x \in M$.

We write a unification problem
in the form $u_1{\doteq}v_1,\dots u_n{\doteq}v_n$
and denote the most general unifier of such a unification problem
by $\mgu(u_1{\doteq}v_1,\dots u_n{\doteq}v_n)$.
Without loss of generality we assume that all most general
unifiers that we consider are idempotent.
Note that if $\sigma$ is an idempotent most general unifier
and $\theta$ is a unifier then
$\theta \circ \sigma = \theta$.

A clause is a finite multiset of equational literals
$s \eq t$ or $s \noteq t$,
written as a disjunction.
The empty clause is denoted by $\bot$.
We call a literal $L$ in a clause $C \lor L$ maximal
\wrt~a strict literal ordering,
if there is no literal in $C$ that is larger than $L$;
we call it strictly maximal,
if there is no literal in $C$ that is larger than or equal to $L$.

We write a rewrite rule as $u \to v$.
Semantically, a rule $u \to v$ is equivalent to an equation $u \eq v$.
If $R$ is a rewrite system, that is, a set of rewrite rules,
we write $s \to_R t$ to indicate that the term $s$ can be reduced to
the term $t$ by applying a rule from $R$.
A rewrite system is called left-reduced, if there is no rule
$u \to v \in R$ such that $u$ can be reduced by a rule from
$R \setminus \{u \to v\}$.

\subsubsection{The Superposition Calculus.}

We summarize the key elements of Bachmair's and Ganzinger's
Superposition Calculus~\cite{BachmairGanzinger1994}.

Let $\succ$ be a reduction ordering that is total on ground terms.
We extend $\succ$ to an ordering on literals,
denoted by $\succL$,\footnote{%
  There are several equivalent ways to define $\succL$.}
by mapping
positive literals $s \eq t$ to multisets $\{s,t\}$
and negative literals $s \noteq t$ to multisets $\{s,s,t,t\}$
and by comparing the resulting multisets using the multiset extension
of $\succ$.
We extend the literal ordering $\succL$ to an ordering on clauses,
denoted by $\succC$,
by comparing the multisets of literals in these clauses
using the multiset extension of $\succL$.

The inference system of the Superposition Calculus
consists of the rules
\textsc{Superposition},
\textsc{Equality Resolution}, and
either \textsc{Ordered Factoring} and \textsc{Merging Paramodulation} or
\textsc{Equality Factoring}.

\medskip
\noindent\begin{tabular}{@{}p{14em}@{\quad}l@{}}
  \textsc{Superposition:}
&
  $\displaystyle{\frac
  {D' \lor {t \eq t'} \qquad C' \lor L[{u}]}
  {(D' \lor C' \lor L[{t'}])\sigma}
  }$
\end{tabular}\par\nobreak\noindent\begin{itemize}\item[]
  where $u$ is not a variable;
  $\sigma = \mgu(t {\doteq} u)$;
  $(C' \lor L[{u}])\sigma \not\preceqC (D' \lor {t \eq t'})\sigma$;
  $(t \eq t')\sigma$ is strictly maximal
  in $(D' \lor {t \eq t'})\sigma$;
  either $L[{u}]$ is a positive literal $s[{u}] \eq s'$
  and $L[{u}]\sigma$ is strictly maximal
  in $(C' \lor L[{u}])\sigma$,
  or $L[{u}]$ is a negative literal $s[{u}] \noteq s'$
  and $L[{u}]\sigma$ is maximal
  in $(C' \lor L[{u}])\sigma$;
  $t\sigma \not\preceq t'\sigma$;
  and $s\sigma \not\preceq s'\sigma$.
\end{itemize}

\medskip
\noindent\begin{tabular}{@{}p{14em}@{\quad}l@{}}
  \textsc{Equality Resolution:}
&
  $\displaystyle{\frac
  {C' \lor {s \noteq s'}}
  {C'\sigma}
  }$
\end{tabular}\par\nobreak\noindent\begin{itemize}\item[]
  where $\sigma = \mgu(s {\doteq} s')$
  and $(s \noteq s')\sigma$ is maximal in $(C' \lor {s \noteq s'})\sigma$.
\end{itemize}

\medskip
\noindent\begin{tabular}{@{}p{14em}@{\quad}l@{}}
  \textsc{Ordered Factoring:}
&
  $\displaystyle{\frac
  {C' \lor L' \lor L}
  {(C' \lor L')\sigma}
  }$
\end{tabular}\par\nobreak\noindent\begin{itemize}\item[]
  where $L$ and $L'$ are positive literals;
  $\sigma = \mgu(L {\doteq} L')$;
  and
  $L\sigma$ is maximal in $(C' \lor L' \lor L)\sigma$.
\end{itemize}

\medskip
\noindent\begin{tabular}{@{}p{14em}@{\quad}l@{}}
  \textsc{Merging Paramodulation:}
&
  $\displaystyle{\frac
  {D' \lor {t \eq t'} \qquad C' \lor r \eq r' \lor s \eq s'[u]}
  {(D' \lor C' \lor r \eq r' \lor s \eq s'[t'])\sigma}
  }$
\end{tabular}\par\nobreak\noindent\begin{itemize}\item[]
  where $u$ is not a variable;
  $\sigma = \mgu(r {\doteq} s, t {\doteq} u)$;
  $(C' \lor r \eq r' \lor s \eq s'[u])\sigma \not\preceqC (D' \lor {t \eq t'})\sigma$;
  $t\sigma \not\preceq t'\sigma$;
  $s\sigma \not\preceq s'\sigma$;
  $(t \eq t')\sigma$ is strictly maximal
  in $(D' \lor {t \eq t'})\sigma$;
  and
  $(s \eq s')\sigma$ is strictly maximal in
  $(C' \lor r \eq r' \lor s \eq s'[u])\sigma$.
\end{itemize}

\medskip
\noindent\begin{tabular}{@{}p{14em}@{\quad}l@{}}
  \textsc{Equality Factoring:}
&
  $\displaystyle{\frac
  {C' \lor {r} \eq r' \lor {s} \eq s'}
  {(C' \lor s' \noteq r' \lor r \eq r')\sigma}
  }$
\end{tabular}\par\nobreak\noindent\begin{itemize}\item[]
  where $\sigma = \mgu(s {\doteq} r)$;
  $s\sigma \not\preceq s'\sigma$;
  and
  $({s} \eq s')\sigma$ is maximal in
  $(C' \lor {r} \eq r' \lor {s} \eq s')\sigma$.
\end{itemize}

The ordering restrictions can be overridden using \emph{selection functions}
that determine for each clause a subset of the negative literals
that are available for inferences.
For simplicity, we leave out this refinement in the rest of this
paper.
We emphasize, however, that all results that we present here
hold also in the presence of selection functions;
the required modifications of the proofs are straightforward.

A ground clause $C$ is called
(classically) redundant \wrt~a set of ground clauses $N$,
if it follows
from clauses in $N$
that are smaller than $C$ \wrt~$\succC$.
A clause is called (classically) redundant \wrt~a set of clauses $N$,
if all its ground instances are redundant \wrt~the set of ground instances
of clauses in $N$.\footnote{%
  Note that ``redundancy'' is called ``compositeness''
  in Bachmair and Ganzinger's 
  \emph{J.~Log.\ Comput.}\ article~\cite{BachmairGanzinger1994}.
  In later papers the standard terminology has changed.}
A ground inference with conclusion $C_0$ and right (or only)
premise $C$ is called 
redundant \wrt~a set of ground clauses $N$,
if one of its premises is redundant \wrt~$N$,
or if
$C_0$ follows
from clauses in $N$
that are smaller than $C$.
An inference is called redundant \wrt~a set of clauses $N$,
if all its ground instances are redundant \wrt~the set of ground instances
of clauses in $N$.

Redundancy of clauses and inferences as defined above is a
redundancy criterion in the sense of
(Waldmann et al.~\cite{WaldmannTourretRobillardBlanchette2022}).
It justifies typical deletion and simplification techniques such as
the deletion of tautological clauses,
subsumption deletion
(\ie, the deletion of a clause $C\sigma \lor D$ in the presence
of a clause $C$)
or demodulation
(\ie, the replacement of a clause $C[s\sigma]$
by $C[t\sigma]$ in the presence of a unit clause $s \eq t$,
provided that $s\sigma \succ t\sigma$).

\section{Incompleteness}
\label{sect:incompl}%

There are two special cases
where \textsc{Destructive Equality Resolution} (\DER{}) is justified
by the classical redundancy criterion:
First, if $t$ is the smallest constant in the signature,
then every ground instance $(x \noteq t \lor C)\theta$
follows from the smaller ground instance $C\{x \mapsto t\}\theta$.
Second, if $t$ is another variable $y$,
then every ground instance $(x \noteq y \lor C)\theta$
follows from the smaller ground instance
$C\{x \mapsto y\}\{y \mapsto s\}\theta$,
where $s$ is the smaller of $x\theta$ and $y\theta$.

But it is easy to see that this does not work
in general:
Let $\succ$ be a Knuth-Bendix ordering with weights
$w(f) = w(b) = 2$,
$w(c) = w(d) = 1$,
$w(z) = 1$ for all variables $z$,
and let $C$ be the clause $x \noteq b \llor f(x) \eq d$,
Then \DER{} applied to $C$ yields
$D\ = \ f(b) \eq d$.
Now consider the substitution $\theta = \{x \mapsto c\}$.
The ground instance $C\theta\ = \ {c \noteq b} \llor {f(c) \eq d}$
is a logical consequence of $D$, but since it is smaller than $D$ itself,
$D$ makes neither $C\theta$ nor $C$ redundant.

Moreover, the following example demonstrates that
the Superposition Calculus becomes indeed incomplete,
if we add \DER{} as a simplification rule,
\ie, if we extend the definition of redundancy in such a way
that the conclusion of \DER{} renders
the premise redundant.

\begin{exam}
\label{ex:incompleteness}%
Let $\succ$ be a Knuth-Bendix ordering with weights
$w(f) = 4$,
$w(g) = 3$,
$w(b) = 4$,
$w(b') = 2$,
$w(c) = w(c') = w(d) = 1$,
$w(z) = 1$ for all variables $z$,
and let $N$ be the set of clauses
\[\begin{array}{@{}l@{}}
  C_1 \ =\ \underline{f(x,d)} \eq x
\\[1ex]
  C_2 \ =\ \underline{f(x,y)} \noteq b \llor g(x) \eq d
\\[1ex]
  C_3 \ =\ b' \eq c' \llor \underline{b} \eq c
\\[1ex]
  C_4 \ =\ \underline{g(b')} \noteq g(c')
\\[1ex]
  C_5 \ =\ \underline{g(c)} \noteq d
\end{array}\]
where all the maximal terms in maximal literals are underlined.

At this point, neither demodulation nor subsumption is possible.
The only inference that must be performed is \textsc{Superposition}
between $C_1$ and $C_2$,
yielding
\[\begin{array}{@{}l@{}}
  C_6 \ =\ \rlap{$x \noteq b \llor g(x) \eq d$}\hphantom{f(x,y) \noteq b \llor g(x) \eq d}
\end{array}\]
and by using \DER{}, $C_6$ is replaced by
\[\begin{array}{@{}l@{}}
  C_7 \ =\ \rlap{$g(b) \eq d$}\hphantom{f(x,y) \noteq b \llor g(x) \eq d}
\end{array}\]

We could now continue with a \textsc{Superposition}
between $C_3$ and $C_7$,
followed by a \textsc{Superposition} with $C_5$,
followed by \textsc{Equality Resolution},
and obtain
\[\begin{array}{@{}l@{}}
  C_8 \ =\ \rlap{$b' \eq c'$}\hphantom{f(x,y) \noteq b \llor g(x) \eq d}
\end{array}\]
from which we can derive the empty clause by
\textsc{Superposition} with $C_4$
and once more by \textsc{Equality Resolution}.
However, clause $C_7$ is in fact redundant:
The ground clauses $C_3$ and $C_4$ imply $b \eq c$;
therefore $C_7$ follows from $C_3$, $C_4$, and
the ground instances
\[\begin{array}{@{}r@{}l@{\qquad}l@{}}
  C_1\{x \mapsto c\} & {} \ =\ f(c,d) \eq c
\\[1ex]
  C_2\{x \mapsto c,\,y \mapsto d\} & {} \ =\ f(c,d) \noteq b \llor g(c) \eq d
\end{array}\]
Because all terms in these clauses are smaller than the maximal term
$g(b)$ of $C_7$,
all the clauses are smaller than $C_7$.
Since $C_7$ is redundant, we are allowed to delete it,
and then no further inferences are possible anymore.
Therefore the clause set $N = \{C_1,\ldots,C_5\}$ is saturated,
even though it is inconsistent
and does not contain the empty clause,
which implies that the calculus is not refutationally complete anymore.
\end{exam}

\section{Completeness, Part I: The Horn Case}
\label{sect:complh}%

\subsection{The Idea}

On the one hand,
Example \ref{ex:incompleteness} demonstrates
that we cannot simply extend the standard redundancy criterion
of the Superposition Calculus
with \DER{} without destroying
refutational completeness, and that this holds even
if we impose a particular strategy on simplification steps
(say, that simplifications must be performed eagerly and that
demodulation and subsumption have a higher precedence
than \DER{}).
On the other hand, Example \ref{ex:incompleteness} is of course
highly unrealistic:
Even though clause $C_7$ is redundant \wrt~the clauses
$C_1$, $C_2$, $C_3$, and $C_4$,
no reasonable superposition prover would ever detect this -- in particular,
since doing so would require to invent
the instance $C_2\{x \mapsto c,\,y \mapsto d\}$ of $C_2$,
which is not in any way syntactically related to $C_7$.\footnote{%
  In fact, a prover might use SMT-style heuristic grounding of non-ground
  clauses, but then finding the contradiction turns out to be easier than
  proving the redundancy of $C_7$.}

This raises the question whether \DER{}
still destroys refutational completeness when we restrict
the other deletion and simplification techniques to those that are typically
implemented in superposition provers, such as
tautology detection, demodulation, or subsumption.
Are there alternative redundancy criteria
that are refutationally complete
together with the Superposition Calculus
and that justify \DER{}
as well as
(all/most) commonly implemented deletion and simplification techniques?
Given the usual structure of the inductive completeness proofs for
saturation calculi,
developing such a redundancy criterion would mean in particular
to find a suitable clause ordering with respect to which certain clauses
have to be smaller than others.
The following example illustrates a fundamental problem
that we have to deal with:

\begin{exam}
Let $\succ$ be a Knuth-Bendix ordering with weights
$w(f) = w(g) = w(h) = w(c) = 1$, $w(b) = 2$,
$w(z) = 1$ for all variables $z$.
Consider the following set of clauses:
\[
\begin{array}{@{}l@{\qquad\qquad}l@{}}
D_1 \ =\  h(x) \eq x
&
C_1 \ =\  h(x) \noteq b \lor f(g(x)) \eq c
\\[1ex]
&
C_2 \ =\  x \noteq b \lor f(g(x)) \eq c
\\[1ex]
D_3 \ =\  h(c) \noteq b \lor g(b) \eq g(c)
&
C_3 \ =\  f(g(b)) \eq c
\\[1ex]
&
C_4 \ =\  h(c) \noteq b \lor f(g(c)) \eq c
\end{array}
\]

Demodulation of $C_1$ using $D_1$
yields $C_2$, and
if we want Demodulation to be a simplification,
then every ground instance $C_1\theta$ should be larger than
the corresponding ground instance $C_2\theta$
in the clause ordering.

\DER{} of $C_2$ yields $C_3$,
and if we want \DER{} to be a simplification,
then every ground instance $C_2\theta$ should be larger than $C_3\theta = C_3$.

A \textsc{Superposition} inference between $D_3$ and $C_3$ yields $C_4$.
The inductive completeness proof for the calculus relies on the fact
that the conclusion of an inference is smaller than the largest premise,
so $C_3$ should be larger than $C_4$.

By transitivity we obtain that every ground instance $C_1\theta$ should be
larger than $C_4$ in the clause ordering.
The clause $C_4$, however,
\emph{is} a ground instance of $C_1$,
which is clearly a contradiction.
\end{exam}

On the other hand, a closer inspection reveals that, depending on the limit
rewrite system $R_*$ that is produced in the completeness proof
for the Superposition Calculus,
the \textsc{Superposition} inference between $D_3$ and $C_3$
is only needed,
when $D_3$ produces the rewrite rule $g(b) \to g(c) \in R_*$,
and that the only critical case for \DER{}
is the one where $b$ can be reduced by some rule in $R_*$.
Since the limit rewrite system $R_*$ is by construction left-reduced,
these two conditions are mutually exclusive.
This observation indicates that we might be able to find a suitable
clause ordering if we choose it depending on $R_*$.

\subsection{Ground Case}

\subsubsection{The Normalization Closure Ordering.}

Let $\succ$ be a reduction ordering that is total on ground terms.
Let $R$ be a left-reduced ground rewrite system contained in $\succ$.

For technical reasons that will become clear later,
we design our ground superposition calculus in such a way
that it operates on ground closures $(C\cdot\theta)$.
Logically, a ground closure $(C\cdot\theta)$ is equivalent to
a ground instance $C\theta$, but an ordering may treat two
closures that represent the same ground instance in different ways.
We consider closures up to $\alpha$-renaming
and ignore the behavior of $\theta$ on variables that do not occur in $C$,
that is, we treat closures $(C_1\cdot\theta_1)$ and $(C_2\cdot\theta_2)$
as equal whenever
$C_1$ and $C_2$ are equal up to bijective variable renaming
and $C_1\theta_1 = C_2\theta_2$.
We also identify $(\bot\cdot\theta)$ and $\bot$.

Intuitively, in order to compare ground closures $C\cdot\theta$,
we normalize all terms occurring in $C\theta$ with $R$,
we compute the multiset of all the redexes occurring during
the normalization and all the resulting normal forms,
and we compare these multisets using the multiset extension of $\succ$.
Since we would like to give redexes and normal forms in negative literals
a slightly larger weight than redexes and normal forms in positive literals,
and redexes in positive literals below the top
a slightly larger weight than redexes at the top,
we combine each of these terms with a label
($0$ for positive at the top, $1$ for positive below the top,
$2$ for negative).
Moreover, whenever some term $t$ occurs several times in $C$ as a subterm,
we want to count the redexes resulting from the normalization of $t\theta$
only once (with the maximum of the labels).
The reason for this is that \DER{} can produce several copies of
the same term $t$ in a clause if the variable to be eliminated
occurs several times in the clause;
by counting all redexes stemming from $t$ only once, we ensure that
this does not increase the total number of redexes.
Formally, we first compute the set (not multiset!)
of all subterms $t$ of $C$, so that duplicates are deleted,
and then compute the multiset of redexes for all terms $t\theta$
(and analogously for terms occurring at the top of a literal).

\begin{dfn}
\label{dfn:ss-ts-lss-rm}%
We define the subterm sets $\ulss^{+}_{>\epsilon}(C)$ and $\ulss^{-}(C)$
and the topterm sets $\ults^{+}(C)$ and $\ults^{-}(C)$
of a clause $C$ by
\[\begin{array}{@{}r@{}l@{}}
  \ulss^{-}(C) = {} & \{\, t \mid C = C' \lor s[t]_p \noteq s' \,\} \\[1ex]
  \ulss^{+}_{>\epsilon}(C) = {} & \{\, t \mid C = C' \lor s[t]_p \eq s',\,p > \epsilon \,\} \\[1ex]
  \ults^{-}(C) = {} & \{\, t \mid C = C' \lor t \noteq t' \,\} \\[1ex]
  \ults^{+}(C) = {} & \{\, t \mid C = C' \lor t \eq t' \,\}\,.
\end{array}\]

We define the labeled subterm set $\lss(C)$ and the
labeled topterm set $\lts(C)$ of a clause $C$ by
\[\begin{array}{@{}r@{}l@{}}
  \lss(C) = {} & \{\, (t,2) \mid t \in \ulss^{-}(C) \,\} \\[2pt]
               & \qquad {} \cup \{\, (t,1) \mid t \in \ulss^{+}_{>\epsilon}(C) \setminus \ulss^{-}(C) \,\} \\[2pt]
               & \qquad {} \cup \{\, (t,0) \mid t \in \ults^{+}(C) \setminus (\ulss^{+}_{>\epsilon}(C) \cup \ulss^{-}(C)) \,\} \\[1ex]
  \lts(C) = {} & \{\, (t,2) \mid t \in \ults^{-}(C) \,\}
                 \cup \{\, (t,0) \mid t \in \ults^{+}(C) \setminus \ults^{-}(C) \,\}\,.
\end{array}\]

We define the $R$-redex multiset $\rrm_R(t,m)$
of a labeled ground term $(t,m)$ with $m \in \{0,1,2\}$ by
\[\begin{array}{@{}r@{}l@{}}
  \rrm_R(t,m) = {} & \emptyset \text{~if $t$ is $R$-irreducible;} \\[1ex]
  \rrm_R(t,m) = {} & \{(u,m)\} \cup \rrm_R(t',m) \text{~if $t \to_R t'$ using the rule $u \to v \in R$}\\[2pt]
                  & \qquad \text{at position $p$ and $p = \epsilon$ or $m > 0$;} \\[1ex]
  \rrm_R(t,m) = {} & \{(u,1)\} \cup \rrm_R(t',m) \text{~if $t \to_R t'$ using the rule $u \to v \in R$}\\[2pt]
                  & \qquad \text{at position $p$ and $p > \epsilon$ and $m = 0$.}
\end{array}\]
\end{dfn}

\begin{lem}
\label{lem:rrm-welldefined}%
For every left-reduced ground rewrite system $R$ contained in $\succ$,
$\rrm_R(t,m)$ is well-defined.
\end{lem}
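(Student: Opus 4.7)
The plan is to establish well-definedness in two stages: first that the recursive definition terminates, and second that the resulting multiset is independent of the choice of rewrite step at each stage.

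Termination is essentially immediate from the hypothesis $R \subseteq {\succ}$. Since $\succ$ is a reduction ordering that is total on ground terms, it is in particular well-founded on them, so every chain $t \to_R t' \to_R t'' \to_R \cdots$ is finite and the recursion cannot descend forever.

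For independence of the choice, I would proceed by well-founded induction on $t$ with respect to $\succ$. The base case, $t$ $R$-irreducible, forces $\rrm_R(t,m) = \emptyset$. In the inductive step, suppose $t \to_R t_1$ using $u_1 \to v_1$ at $p_1$ and $t \to_R t_2$ using $u_2 \to v_2$ at $p_2$, and write $\ell_i$ for the label the definition assigns to the step $(p_i, u_i \to v_i)$. The key structural observation, which depends crucially on left-reducedness of $R$, is that either $(p_1, u_1 \to v_1) = (p_2, u_2 \to v_2)$, or the positions $p_1$ and $p_2$ are parallel and both strictly below the root. Indeed, if $p_1$ were a proper prefix of $p_2$, then the left-hand side $u_1 = t|_{p_1}$ would contain the left-hand side $u_2 = t|_{p_2}$ as a proper subterm, so the distinct rule $u_2 \to v_2$ could reduce $u_1$, contradicting left-reducedness; the case $p_1 = p_2$ with distinct rules is excluded because it would produce two rules sharing a left-hand side; and the same argument shows that a redex at $\epsilon$ cannot coexist with any other redex.

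In the parallel non-root case, the two rewrites commute to a common term $t_{12}$, with $t_1 \to_R t_{12}$ at $p_2$ and $t_2 \to_R t_{12}$ at $p_1$. Since $p_1,p_2 > \epsilon$, the labels $\ell_1,\ell_2$ coincide with those that the definition would assign to the corresponding steps taken from $t_1$ and $t_2$ (for $m \in \{1,2\}$ every label is $m$; for $m = 0$ both are $1$, as the positions are proper). Applying the induction hypothesis at the $\succ$-smaller terms $t_1$ and $t_2$ lets me realize them by the reduction at the other position, yielding $\rrm_R(t_1,m) = \{(u_2,\ell_2)\} \cup \rrm_R(t_{12},m)$ and, symmetrically, $\rrm_R(t_2,m) = \{(u_1,\ell_1)\} \cup \rrm_R(t_{12},m)$. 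Prepending the first step then shows that both candidate values for $\rrm_R(t,m)$ equal $\{(u_1,\ell_1),(u_2,\ell_2)\} \cup \rrm_R(t_{12},m)$.

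The delicate point, and the step I expect to require the most care, is the label bookkeeping for $m = 0$: the definition treats the root position and strictly lower positions differently, so I must rule out the possibility that two reduction orders assign different labels to the same redex. Left-reducedness rescues this: an $\epsilon$-redex forbids any other redex in $t$, so whenever the choice of first step is genuinely ambiguous, all positions involved lie strictly below the root and the labels automatically agree.
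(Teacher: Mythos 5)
The proposal is correct and takes essentially the same approach as the paper: termination from $R \subseteq {\succ}$, followed by a Newman-style uniqueness argument by well-founded induction over $\succ$, with the key structural observation that left-reducedness of a ground system forces any two distinct redex positions to be parallel and strictly below the root, which in turn keeps the label bookkeeping consistent. The only differences are cosmetic (you keep the functional notation and argue directly rather than via a minimal counterexample, and you handle all $m$ uniformly where the paper does $m=0$ and says the rest is analogous).
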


\begin{proof}
We write the function $\rrm_R(t,m)$ as a ternary relation
$\rrm'_R(t,m,S)$ and show that for every $t$ and $m$
there is exactly one multiset $S$ such that $\rrm'_R(t,m,S)$ holds:
\[\begin{array}{@{}l@{}l@{}}
  \rrm'_R(t,m,\emptyset)        & \text{~~if $t$ is $R$-irreducible;} \\[1ex]
  \rrm'_R(t,m,S \cup \{(u,m)\}) & \text{~~if $t \to_R t'$ using the rule $u \to v \in R$} \\[2pt]
                                & \qquad \text{at position $p$, $p = \epsilon$ or $m > 0$, and $\rrm'(t',m,S)$.} \\[1ex]
  \rrm'_R(t,m,S \cup \{(u,1)\}) & \text{~~if $t \to_R t'$ using the rule $u \to v \in R$}\\[2pt]
                                & \qquad \text{at position $p$, $p > \epsilon$, $m = 0$, and $\rrm_R(t',m,S)$.}
\end{array}\]

For simplicity, we consider only the case $m = 0$;
the cases $m = 1$ and $m = 2$ are proved analogously.

An obvious induction over the term ordering $\succ$ shows that
for every term $t$ there is
at least one multiset $S$ such that $\rrm'_R(t,0,S)$ holds.

It remains to show that there is at most one such multiset.
For this part, we proceed as in the well-known proof of Newman's lemma
(also known as ``Diamond lemma'').
We observe first that if $t$ is $R$-reducible by a rule $u \to v$
at the position $p$, 
then $p$ determines $u$ uniquely since $R$ is ground,
and $u$ determines $v$ uniquely since $R$ is left-reduced.

Assume that there is a term $t$
such that $\rrm'_R(t,0,S_1)$ and $\rrm'_R(t,0,S_2)$ hold
for two different multisets $S_1$ and $S_2$.
By well-foundedness of $\succ$, we may assume that
$t$ is minimal w.r.t.~$\succ$ with this property.
Clearly,
$t$ must be $R$-reducible, otherwise there is exactly one multiset $S$
such that $\rrm'_R(t,0,S)$ holds, namely $S = \emptyset$.
Assume that there are positions $p_i$ $(i \in \{1,2\})$
such that $t$ can be rewritten to $t_i = t[v_i]_{p_i}$
using rules $u_i \to v_i$ at $p_i$,
that $\rrm'(t_i,0,S'_i)$,
and that $S_i = S'_i \cup \{(u_i,m_i)\}$,
where $m_i = 0$ if $p_i = \epsilon$ and $m_i = 1$ otherwise.

We know that $t \succ t_i$, and
by minimality of $t$ this implies that there is only one $S'_1$
and one $S'_2$ such that  
$\rrm'(t_i,0,S'_i)$ holds.
If $p_1$ and $p_2$ were equal, then
$u_1 = u_2$ and $m_1 = m_2$,
therefore $v_1 = v_2$ and $t_1 = t_2$,
therefore $S'_1 = S'_2$, and therefore $S_1 = S_2$, contradicting our
assumption.
Thus we know that $p_1 \not= p_2$,
and since $R$ is left-reduced and ground, this implies that
$p_1$ and $p_2$ must be parallel positions, that
neither of them can be $\epsilon$, and that $m_1 = m_2 = 1$.

Let $t_3 = t[v_1]_{p_1}[v_2]_{p_2}$.
Then $t_1 \to_R t_3$ using $u_2 \to v_2$ at $p_2$
and $t_2 \to_R t_3$ using $u_1 \to v_1$ at $p_1$.
By minimality of $t$,
there is only one $S'_3$ such that
$\rrm'(t_3,0,S'_3)$,
and furthermore
$S'_1 = S'_3 \cup \{(u_2,1)\}$
and
$S'_2 = S'_3 \cup \{(u_1,1)\}$.
But then
$S_1 = S'_1 \cup \{(u_1,1)\} = S'_3 \cup \{(u_2,1),(u_1,1)\}$
and
$S_2 = S'_2 \cup \{(u_2,1)\} = S'_3 \cup \{(u_1,1),(u_2,1)\}$,
contradicting the assumption that $S_1 \not= S_2$.
\qed
\end{proof}

\begin{dfn}
\label{dfn:nm-ngt}%
We define the $R$-normalization multiset $\nm_R(C\cdot\theta)$
of a ground closure $(C\cdot\theta)$ by
\[\begin{array}{@{}r@{}l@{}}
  \nm_R(C\cdot\theta) = {} & \,\bigcup_{(f(t_1,\dots,t_n),m) \in \lss(C)}
         \rrm_R(f(t_1\theta{\downarrow}_R,\dots,t_n\theta{\downarrow}_R),m) \\[3pt]
 & \hspace*{2em} {} \cup \bigcup_{(x,m) \in \lss(C)} \rrm_R(x\theta,m) \\[3pt]
 & \hspace*{2em} {} \cup \bigcup_{(t,m) \in \lts(C)} \{ (t\theta{\downarrow}_R,m) \}
\end{array}\]

\begin{exam}
Let $C = h(g(g(x))) \eq f(f(b))$;
let $\theta = \{x\mapsto b\}$.
Then 
$\lss(C) = \{(h(g(g(x))),0),$ $(g(g(x)),1),$ $(g(x),1),$ $(x,1),$
$(f(f(b)),0),$ $(f(b),1),$ $(b,1)\}$
and
$\lts(C) = \{(h(g(g(x))),0),$ $(f(f(b)),0)\}$.

Let $R = \{f(b) \to b,$ $g(g(b)) \to b\}$.
Then $\nm_R(C\cdot\theta) = \{(g(g(b)),1),$ $(f(b),1),$ $(f(b),0),$
$(h(b),0),$ $(b,0)\}$,
where the first element is a redex from the normalization of $g(g(x))\theta$,
the second from the normalization of $f(b)\theta$,
the third from the normalization of $f(f(b))\theta$.
The remaining elements are the normal forms
of $h(g(g(x)))\theta$ and $f(f(b))\theta$.
\end{exam}

The $R$-normalization closure ordering $\ngt_R$ compares
ground closures $(C\cdot\theta_1)$ and $(D\cdot\theta_2)$
using a lexicographic combination of three orderings:
\begin{itemize}
\item
  first, the multiset extension $(({\succ},{>})_\lex)_\mul$
  of the lexicographic combination of
  the reduction ordering $\succ$ and the ordering $>$ on natural numbers
  applied to the multisets $\nm_R(C\cdot\theta_1)$ and $\nm_R(D\cdot\theta_2)$,
\item
  second, the traditional clause ordering $\succC$ applied to
  $C\theta_1$ and $D\theta_2$,
\item
  and third, an arbitrary well-founded ordering $\succCL$ on ground closures
  that is total on ground closures
  $(C\cdot\theta_1)$ and $(D\cdot\theta_2)$
  with $C\theta_1 = D\theta_2$
  and that has the property that
  $(C\cdot\theta_1) \succCL (D\cdot\theta_2)$
  whenever $C\theta_1 = D\theta_2$ and $D$ is an instance of $C$
  but not vice versa.
\end{itemize}
\end{dfn}

\begin{lem}
\label{lem:ordering-of-instances}%
If $(C\cdot\theta)$ and $(C\sigma\cdot\theta')$
are ground closures,
such that
$C\theta = C\sigma\theta'$,
and $C$ and $C\sigma$ are not equal up to bijective renaming,
then $(C\cdot\theta) \ngt_R (C\sigma\cdot\theta')$.
\end{lem}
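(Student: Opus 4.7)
The ordering $\ngt_R$ is a lexicographic combination of three components, so the plan is to compare $(C\cdot\theta)$ and $(C\sigma\cdot\theta')$ component by component. Since $C\theta = C\sigma\theta'$ as ground clauses, the middle component (the classical clause ordering $\succC$) will be an equality automatically, so it suffices to show that the first component is always greater or equal, and to fall back on $\succCL$ when that component happens to be an equality as well.

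For the first component I would first prove a compositional identity for $\rrm_R$: for any ground term $t$ and any position $p$ in $t$, $\rrm_R(t,m) = \rrm_R(t|_p, m') \cup \rrm_R(t[(t|_p){\downarrow}_R]_p, m)$, where $m' = m$ if $p = \epsilon$ or $m > 0$, and $m' = 1$ if $p > \epsilon$ and $m = 0$. This is a routine induction on $t$ using Lemma~\ref{lem:rrm-welldefined} to rearrange the order in which reductions are performed. Applied iteratively, this identity decomposes the single variable contribution $\rrm_R(x\theta,m) = \rrm_R((x\sigma)\theta', m)$ appearing in $\nm_R(C\cdot\theta)$ for each $x \in \vars(C)$ of label $m$ into one piece per function-symbol position of $\sigma(x)$ together with a top-level remainder for $\sigma(x)$ itself. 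These pieces coincide exactly with the contributions that the entries of $\lss(C\sigma) \cup \lts(C\sigma)$ lying strictly inside $\sigma(x)$ make to $\nm_R(C\sigma\cdot\theta')$: the labels align, since any strict subterm of $\sigma(x)$ in $\lss(C\sigma)$ inherits precisely the label $m' = m$ (if $m > 0$) or $m' = 1$ (if $m = 0$) prescribed by the identity. For subterms of $C\sigma$ that are $\sigma$-images of non-variable subterms of $C$, the two sides agree directly.

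The only residual source of discrepancy is that $\sigma$ may identify two distinct subterms of $C$, so that two entries of $\lss(C) \cup \lts(C)$ collapse into a single entry of $\lss(C\sigma) \cup \lts(C\sigma)$ (with the maximum of the two original labels). Because $\lss$ and $\lts$ are sets, such an entry is counted twice on the left but only once on the right, which yields the multiset inclusion $\nm_R(C\cdot\theta) \supseteq \nm_R(C\sigma\cdot\theta')$, strict whenever any merged entry has a nonempty contribution. If the inequality is strict, the first component alone proves $(C\cdot\theta) \ngt_R (C\sigma\cdot\theta')$. Otherwise the first and second components are both equalities, and I invoke the third component: $C\sigma$ is an instance of $C$ via $\sigma$, and the assumption that $C$ and $C\sigma$ are not equal up to bijective renaming forces $C$ not to be an instance of $C\sigma$, so the defining property of $\succCL$ yields $(C\cdot\theta) \succCL (C\sigma\cdot\theta')$. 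The main obstacle is the first-component analysis: carrying out the compositional decomposition of $\rrm_R$ in lock-step with the evolution of $\lss$ and $\lts$ labels under instantiation, with particular care around entries for $\sigma(x)$ itself (handled by the top-level remainder) as opposed to its proper subterms.
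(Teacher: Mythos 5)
Your high-level route is the same as the paper's: you compare component by component, observing that the second component is automatically an equality because $C\theta = C\sigma\theta'$, so the argument reduces to showing $\nm_R(C\cdot\theta) \supseteq \nm_R(C\sigma\cdot\theta')$ and then falling back on $\succCL$ in the equality case. The paper's own proof of the multiset inclusion is just two sentences (the normalization of the same ground clause yields the same redexes and normal forms, and merging is at least as coarse on the right), so your attempt to make it precise via a compositional identity for $\rrm_R$ (whose proof rests on Lemma~\ref{lem:rrm-welldefined}, i.e., strategy-independence) and a decomposition of the variable contributions $\rrm_R(x\theta,m)$ is genuinely useful extra structure that the paper does not supply.

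However, one of your intermediate claims is false as stated, and it is exactly the step the paper glosses over. You assert that every strict subterm of $\sigma(x)$ lying in $\lss(C\sigma)$ ``inherits precisely the label $m' = m$ (if $m>0$) or $m' = 1$ (if $m = 0$)''. The labels in $\lss(C\sigma)$ are determined by \emph{all} occurrences of the subterm in $C\sigma$, not just those inside $\sigma(x)$. A proper subterm $s$ of $\sigma(x)$ can also occur elsewhere in $C\sigma$ at a higher-type position---for example, in a negative literal because $s = t\sigma$ for a non-variable subterm $t$ of $C$ with label $2$, or because $s$ is also a proper subterm of $\sigma(y)$ for a variable $y$ occurring negatively. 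In that case $s$ receives a strictly larger label than the $m'$ your decomposition produces, the piece $\rrm_R(s\theta'_{(top)},m')$ coming from the $x$-decomposition does \emph{not} equal the contribution $\rrm_R(s\theta'_{(top)},m_s)$ of $(s,m_s)\in\lss(C\sigma)$, and so your ``the pieces coincide exactly'' claim breaks down, which in turn makes the subsequent pigeonhole count (``the only residual source of discrepancy is collapse'') incomplete. The repair is to observe that labels can only increase under instantiation and that the underlying term multiset of $\rrm_R(w,\cdot)$ is label-independent, so whenever the right-hand label $m_s$ exceeds $m'$ there must exist a \emph{different} left-hand source (the subterm of $C$ or the variable $y$ responsible for the higher label) whose decomposition produces a piece $\rrm_R(s\theta'_{(top)},m_s)$ with the matching label; the $x$-decomposition piece with the lower label then becomes pure surplus on the left. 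Symmetrically, when a single $s$ is simultaneously the $\sigma$-image of several non-variable subterms of $C$ and sits inside several $\sigma(y)$, exactly one of these left-hand witnesses is paired with the single right-hand entry and the others are surplus. Without this injective pairing argument the multiset inclusion does not follow from what you wrote, even though the conclusion is correct.

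Your handling of the second and third components, and of the defining property of $\succCL$ in the equality case, is fine and matches the paper.
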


\begin{proof}
The $R$-normalization of $C\theta$ and $C\sigma\theta'$
yields the same redexes and normal forms.
Moreover, whenever a term $t$ occurs multiple times in $C$
(which means that the redexes stemming from these term occurrences
are counted only once),
then the terms occurring at the corresponding positions in $C\sigma$
are equal as well
(so the redexes stemming from these term occurrences
are again counted only once).
Therefore $\nm_R(C\cdot\theta) \supseteq \nm_R(C\sigma\cdot\theta')$.

If $\nm_R(C\cdot\theta) \supset \nm_R(C\sigma\cdot\theta')$, then
$\nm_R(C\cdot\theta) \mathrel{(({\succ},{>})_\lex)_\mul}
\nm_R(C\sigma\cdot\theta')$
and hence
$(C\cdot\theta) \ngt_R (C\sigma\cdot\theta')$;
otherwise
$\nm_R(C\cdot\theta) = \nm_R(C\sigma\cdot\theta')$,
$C\theta = C\sigma\theta'$,
and
$(C\cdot\theta) \succCL (C\sigma\cdot\theta')$,
hence again
$(C\cdot\theta) \ngt_R (C\sigma\cdot\theta')$.
\qed
\end{proof}

\begin{exam}
Let $C = h(f(x)) \eq f(y)$,
let $\theta' = \{x\mapsto b\}$;
let $\theta = \{x\mapsto b,$ $y\mapsto b\}$;
let $\sigma = \{y \mapsto x\}$.
Let $R = \{f(b) \to b\}$.

Then $\nm_R(C \cdot \theta) =
\{(f(b),1),$ $(f(b),0),$ $(h(b),0),$ $(b,0)\}$
and
$\nm_R(C\sigma \cdot \theta') =
\{(f(b),1),$ $(h(b),0),$ $(b,0)\}$,
and therefore
$(C\cdot\theta) \ngt_R (C\sigma\cdot\theta')$.
The subterm $f(x)$ occurs twice in $C\sigma$
(with labels $0$ and $1$),
but only once in $\lss(C\sigma)$
(with the larger of the two labels),
and the same holds for the redex $f(b)$
stemming from $f(x)\theta'$
in $\nm_R(C\sigma \cdot \theta')$.
\end{exam}

\subsubsection{Parallel Superposition.}

In the normalization closure ordering,
redexes and normal forms stemming from several occurrences
of the same term $u$ in a closure $(C \cdot \theta)$
are counted only once.
\pagebreak[3]
When we perform a \textsc{Superposition} inference,
this fact leads to a small problem:
Consider a closure $(C[u,u] \cdot \theta)$.
In the $R$-normalization multiset of this closure,
the redexes stemming from the two copies of $u\theta$
are counted only once.
Now suppose that one of the two copies of $u$ is replaced by
a smaller term $v$ in a \textsc{Superposition} inference.
The resulting closure $(C[v,u] \cdot \theta)$ should be smaller
than the original one, but it isn't: The redexes
stemming from $u\theta$ are still counted once,
and additionally, the $R$-normalization multiset
now contains the redexes stemming from $v\theta$.

There is an easy fix for this problem, though:
We have to replace the ordinary \textsc{Superposition} rule
by a \textsc{Parallel Superposition} rule, in which
\emph{all} copies of a term $u$ in a clause $C$ are replaced
whenever one copy occurs in a maximal side of a maximal literal.
Note that this is a well-known optimization that superposition provers
implement (or should implement) anyhow.

We need one further modification of the inference rule:
The side conditions of the superposition calculus
use the traditional clause ordering $\succC$,
but our completeness proof and redundancy criterion will be based on
the orderings $\ngt_R$.
The difference between these orderings
becomes relevant in particular when we consider
\textsc{(Parallel) Superposition}
inferences where the clauses overlap at the top
of a positive literal.
In this case, the $\ngt_R$-smaller of the two premises
may actually be the $\succC$-larger one.
Therefore, the usual condition
that the left premise of a \textsc{(Parallel) Superposition}
inference has to be $\succC$-minimal
has to be dropped for these inferences.

\medskip
\noindent\begin{tabular}{@{}p{14em}@{\quad}l@{}}
  \textsc{Parallel Superposition:}
&
  $\displaystyle{\frac
  {D' \lor {t \eq t'} \qquad C[u,\ldots,u]_{p_1,\dots,p_k}}
  {(D' \lor C[t',\ldots,t']_{p_1,\dots,p_k})\sigma}
  }$
\end{tabular}\par\nobreak\noindent\begin{itemize}\item[]
  where $u$ is not a variable;
  $\sigma = \mgu(t {\doteq} u)$;
  $p_1,\dots,p_k$ are all the occurrences of $u$ in $C$;
  if one of the occurrences of $u$ in $C$ is in a negative literal
  or below the top in a positive literal then
  $C\sigma \not\preceqC (D' \lor {t \eq t'})\sigma$;
  $(t \eq t')\sigma$ is strictly maximal
  in $(D' \lor {t \eq t'})\sigma$;
  either one of the occurrences of $u$ in $C$ is in a
  positive literal $L[{u}] = s[{u}] \eq s'$
  and $L[{u}]\sigma$ is strictly maximal
  in $C\sigma$,
  or one of the occurrences of $u$ in $C$ is in a
  negative literal $L[{u}] = s[{u}] \noteq s'$
  and $L[{u}]\sigma$ is maximal
  in $(C' \lor L[{u}])\sigma$;
  $t\sigma \not\preceq t'\sigma$;
  and $s\sigma \not\preceq s'\sigma$.
\end{itemize}

\subsubsection{Ground Closure Horn Superposition.}

We will show that our calculus is refutationally complete
for Horn clauses by lifting a similar result for
ground closure Horn superposition.
We emphasize that our calculus is not a basic or constraint calculus
such as
(Bachmair et al.~\cite{BachmairGanzingerLynchSnyder1995})
or (Nieuwenhuis and Rubio~\cite{nieuwenhuis-rubio-1995}).
Even though the ground version that we present here operates on closures,
it is essentially a rephrased version of the
standard ground Superposition Calculus.
This explains why we also have to consider superpositions below
variable positions.

The ground closure calculus uses the following three inference rules.
We assume that in binary inferences the variables in the premises
$(D\cdot\theta_2)$ and $(C\cdot\theta_1)$ are renamed
in such a way that $C$ and $D$ do not share variables.
We can then assume without loss of generality that the substitutions
$\theta_2$ and $\theta_1$ agree.

\pagebreak[3]
\medskip
\noindent\begin{tabular}{@{}p{14em}@{\quad}l@{}}
  \textsc{Parallel Superposition I:}
&
  $\displaystyle{\frac
  {(D' \lor {t \eq t'} \cdot \theta) \qquad (C[u,\dots,u]_{p_1,\dots,p_k} \cdot \theta)}
  {((D' \lor C[t',\dots,t']_{p_1,\dots,p_k})\sigma \cdot \theta)}
  }$
\end{tabular}\par\nobreak\noindent\begin{itemize}\item[]
  where $u$ is not a variable;
  $t\theta = u\theta$;
  $\sigma = \mgu(t {\doteq} u)$;
  $p_1,\dots,p_k$ are all the occurrences of $u$ in $C$;
  if one of the occurrences of $u$ in $C$ is in a negative literal
  or below the top in a positive literal then
  $(D' \lor {t \eq t'})\theta \precC C\theta$;
  one of the occurrences of $u$ in $C$ is either in a
  positive literal $s[{u}] \eq s'$ such that
  $(s[{u}] \eq s')\theta$
  is strictly maximal in $C\theta$
  or in a negative literal $s[{u}] \noteq s'$ such that
  $(s[{u}] \noteq s')\theta$
  is maximal in $C\theta$;
  $s[{u}]\theta \succ s'\theta$;
  $(t \eq t')\theta$ is strictly
  maximal in $(D' \lor {t \eq t'})\theta$; and $t\theta \succ t'\theta$.
\end{itemize}

\medskip
\noindent\begin{tabular}{@{}p{14em}@{\quad}l@{}}
  \textsc{Parallel Superposition II:}
&
  $\displaystyle{\frac
  {(D' \lor {t \eq t'} \cdot \theta) \qquad (C \cdot \theta)}
  {(D' \lor C \cdot \theta[x \mapsto u[t'\theta]])}
  }$
\end{tabular}\par\nobreak\noindent\begin{itemize}\item[]
  where $x$ is a variable of $C$;
  $x\theta = u[t\theta]$;
  if one of the occurrences of $x$ in $C$ is in a negative literal
  or below the top in a positive literal then
  $(D' \lor {t \eq t'})\theta \precC C\theta$;
  one of the occurrences of $x$ in $C$ is either in a
  positive literal ${s[{x}] \eq s'}$ such that
  ${(s[{x}] \eq s')\theta}$
  is strictly maximal in $C\theta$
  or in a negative literal $s[{x}] \noteq s'$ such that
  $(s[{x}] \eq s')\theta$
  is maximal in $C\theta$;
  $s[{x}]\theta \succ s'\theta$;
  $(t \eq t')\theta$ is strictly
  maximal in $(D' \lor {t \eq t'})\theta$; and $t\theta \succ t'\theta$.
\end{itemize}

\medskip
\noindent\begin{tabular}{@{}p{14em}@{\quad}l@{}}
  \textsc{Equality Resolution:}
&
  $\displaystyle{\frac
  {(C' \lor {s \noteq s'} \cdot \theta)}
  {(C'\sigma \cdot \theta)}
  }$
\end{tabular}\par\nobreak\noindent\begin{itemize}\item[]
  where $s\theta = s'\theta$;
  $\sigma = \mgu(s {\doteq} s')$;
  and
  $(s \noteq s')\theta$ is
  maximal in $(C' \lor {s \noteq s'})\theta$.
\end{itemize}

The following lemmas
compare the conclusion $\concl(\iota)$ of an inference $\iota$
with its right or only premise:

\begin{lem}
\label{lem:eq-res-reduces}%
Let $\iota$ be a ground \textsc{Equality Resolution} inference.
Then $\concl(\iota)$ is $\ngt_R$-smaller than its premise.
\end{lem}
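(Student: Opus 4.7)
The plan is to establish $\ngt_R$-dominance of the premise $P = ((C' \lor s \noteq s') \cdot \theta)$ over the conclusion $Q = (C'\sigma \cdot \theta)$ by showing strict dominance at the first (multiset) component $(({\succ},{>})_\lex)_\mul$ of $\ngt_R$ in the typical case, and invoking the second component $\succC$ as a tiebreaker in the degenerate cases. I factor the comparison through the intermediate closure $M = (C' \cdot \theta)$.

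For $M$ versus $Q$: since $\sigma = \mgu(s \doteq s')$ is idempotent and $\theta$ is a unifier of $s, s'$, we have $\sigma\theta = \theta$, so $C'\sigma\theta = C'\theta$, and in particular $M\theta = Q\theta$. Running the argument already given in the proof of Lemma~\ref{lem:ordering-of-instances} on the pair $(C', \sigma)$ then yields $\nm_R(M) \supseteq \nm_R(Q)$.

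For $P$ versus $M$: I would first check that for every $(t, m) \in \lss(C')$ there is a corresponding $(t, m') \in \lss(C' \lor s \noteq s')$ with $m' \ge m$, the label being bumped up to $2$ precisely when $t$ additionally occurs as a subterm of $s$ or $s'$; the analogous statement holds for $\lts$. Unfolding the definition of $\rrm_R$ shows that raising $m$ preserves the underlying redex terms and only increases their labels, so each entry contributed to $\nm_R(M)$ is matched in $\nm_R(P)$ by an entry of equal or greater value in $({\succ},{>})_\lex$. Moreover, $\lts(C' \lor s \noteq s')$ contains the new entries $(s, 2)$ and $(s', 2)$, producing copies of $(s\theta{\downarrow}_R, 2)$ in $\nm_R(P)$, and subterms of $s, s'$ add further label-$2$ contributions via $\lss$. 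In the typical case these additions give strict multiset dominance $\nm_R(P) \mathrel{(({\succ},{>})_\lex)_\mul} \nm_R(M)$, and hence of $\nm_R(Q)$ by chaining with the first step. In the degenerate case where all the contributions from $s, s'$ are already present in $\nm_R(M)$, so $\nm_R(P) = \nm_R(M) = \nm_R(Q)$, I would fall back to the second component: the ground-literal multiset $P\theta$ strictly contains $Q\theta = M\theta$ by the extra literal $s\theta \noteq s\theta$, so $P\theta \succC Q\theta$.

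The main obstacle lies in the label-shift bookkeeping of the second step, in particular when subterms of $s$ or $s'$ happen to coincide with subterms of $C'$ and thereby force their $\lss$-labels to jump to $2$; one must verify carefully that these increases feed through $\rrm_R$ to give genuine multiset dominance in $(({\succ},{>})_\lex)_\mul$ rather than just pointwise comparisons between elements of the two normalization multisets.
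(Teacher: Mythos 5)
Your proof is correct and follows essentially the same route as the paper: both factor the comparison through the intermediate closure $(C'\cdot\theta)$, establishing that $\nm_R$ is non-decreasing from $(C'\sigma\cdot\theta)$ to $(C'\cdot\theta)$ (via the argument of Lemma~\ref{lem:ordering-of-instances}) and from $(C'\cdot\theta)$ to the premise (via label bumping and the extra contributions of $s,s'$), with $\succC$ on the ground instances as a tiebreaker. The paper presents the two steps in the opposite order and invokes Lemma~\ref{lem:ordering-of-instances} as a black box after a case split on whether $C'$ and $C'\sigma$ coincide up to renaming, but the substance is the same; your worry about the label-shift bookkeeping is resolved exactly as you suspect, by the pointwise matching from $\nm_R(C'\cdot\theta)$ into $\nm_R(C'\lor s\noteq s'\cdot\theta)$ sending each labeled element to one with the same underlying term and a $\ge$ label, which suffices for dominance in $(({\succ},{>})_\lex)_\mul$.
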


\begin{proof}
Since every term that occurs in $\nm_R(C' \cdot \theta)$
occurs also in $\nm_R(C' \lor s \noteq s' \cdot \theta)$,
possibly with a larger label, the multiset
$\nm_R(C' \cdot \theta)$ is smaller than or equal to
$\nm_R(C' \lor s \noteq s' \cdot \theta)$, and
if the two multisets are equal,
then $C'\theta \precC (C' \lor s \noteq s')\theta$.
Therefore $(C' \cdot \theta) \nlt_R (C' \lor {s \noteq s'} \cdot \theta)$.
If $C'$ and $C'\sigma$ are equal up to bijective
renaming, we are done, otherwise the result follows from
Lemma~\ref{lem:ordering-of-instances}.
\qed
\end{proof}

\begin{lem}
\label{lem:par-sup-reduces}%
Let $\iota$ be a ground \textsc{Parallel Superposition} inference
\[\frac
  {(D' \lor {t \eq t'} \cdot \theta) \qquad (C[u,\dots,u]_{p_1,\dots,p_k} \cdot \theta)}
  {((D' \lor C[t',\dots,t']_{p_1,\dots,p_k})\sigma \cdot \theta)}
\]
with $t\theta = u\theta$ and $\sigma = \mgu(t {\doteq} u)$ or
\[\frac
  {(D' \lor {t \eq t'} \cdot \theta) \qquad (C \cdot \theta)}
  {(D' \lor C \cdot \theta[x \mapsto u[t'\theta]])}
\]
with $x\theta = u[t\theta]$.
If $(t\theta \to t'\theta) \in R$,
then $\concl(\iota)$ is $\ngt_R$-smaller than $(C\cdot\theta)$.
\end{lem}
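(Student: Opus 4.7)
The plan is to show $\nm_R(C\cdot\theta)\mathrel{(({\succ},{>})_\lex)_\mul}\nm_R(\concl(\iota)\cdot\theta')$, so that the $\ngt_R$-comparison is settled already at the first component; here $\theta'$ is $\theta$ for the first inference form and $\theta[x\mapsto u[t'\theta]]$ for the second. The pivotal observation is that since $R$ is left-reduced and $(t\theta\to t'\theta)\in R$, every proper subterm of $t\theta$ is $R$-irreducible. In the first form the non-variable subterm $u$ of $C$ (which sits in $\lss(C)$ with some label $m_u$) therefore contributes $\rrm_R(u\theta,m_u)=\{(t\theta,m_u)\}\cup\rrm_R(t'\theta,m_u)$ to $\nm_R(C\cdot\theta)$, with the leading redex arising at position $\epsilon$; in the second form an analogous copy of $(t\theta,\cdot)$ sits inside $\rrm_R(x\theta,m_x)=\rrm_R(u[t\theta],m_x)$ at the (non-top) position of $t\theta$ inside $u$.

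For the first form I would first invoke Lemma~\ref{lem:ordering-of-instances} to pass from $(C\cdot\theta)$ to $(C\sigma\cdot\theta)$, reducing without loss of generality to the sub-case where $\sigma$ acts trivially on $\vars(C)$. The heart of the proof is then a piecewise multiset comparison: using the distinguished redex $(t\theta,\cdot)$ as a singleton witness in $\nm_R(C\cdot\theta)$, one shows that every element of $\nm_R(\concl\cdot\theta')$ either equals a corresponding element of $\nm_R(C\cdot\theta)$ (and hence needs no domination) or has first component strictly $\prec t\theta$ (and is dominated by the witness in $({\succ},{>})_\lex$). Matching follows because superterms of the replaced occurrences give identical contributions on both sides, their arguments being pre-normalized and $u\theta\downarrow_R=t'\theta\downarrow_R$. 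Domination uses that (i)~contributions from the replacement $t'$ at positions $p_1,\dots,p_k$ (first form) or from the modified variable image $u[t'\theta]$ (second form) have first components $\preceq t'\theta\prec t\theta$; and (ii)~all terms in $D'\theta$ are strictly $\prec t\theta$. Point (ii) is the critical use of the Horn restriction governing this subsection: since $D'$ is a disjunction of negative literals only, the doubled-multiset representation of these literals combined with strict maximality of $(t\eq t')\theta$ in $(D'\lor t\eq t')\theta$ forces every term in $D'\theta$ to be strictly $\prec t\theta$; without Horn, a literal $t\eq a$ with $a\theta\prec t'\theta$ could live in $D'$ and would contribute an unmatched copy of $(t\theta,\cdot)$ on the conclusion side.

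The main obstacle I expect is the bookkeeping in the second inference form, where the rewrite happens inside a variable image rather than at a syntactic subterm of $C$: one must check that the label $m_x$ propagates correctly through $\rrm_R(u[\cdot],m_x)$ and that the combined-label convention of $\lss$ (each distinct subterm recorded once with the maximum of its labels) does not let the conclusion silently reproduce the redex $(t\theta,\cdot)$. A related subtlety is shared subterms of $C$ and $D'$ in the second form: the label of such a subterm in $\lss(D'\lor C)$ may strictly exceed its label in $\lss(C)$, producing larger-label redex elements on the conclusion side; in the Horn setting these shared subterms all lie $\prec t\theta$, so the distinguished redex $(t\theta,\cdot)$ still dominates them in $({\succ},{>})_\lex$.
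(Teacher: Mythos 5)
Your argument follows the paper's proof essentially unchanged: the distinguished redex $(t\theta,m_u)$ (or its analogue arising from $x\theta = u[t\theta]$ in the second form) is removed from $\nm_R(C\cdot\theta)$, all remaining differences in the normalization multiset involve terms strictly below $t\theta$, and Lemma~\ref{lem:ordering-of-instances} absorbs the residual substitution $\sigma$; your choice to instantiate the premise by $\sigma$ up front rather than the conclusion at the end is a cosmetic variation of the same step. Your explicit remark that the Horn restriction (all literals of $D'$ negative) is what forces every term of $D'\theta$ strictly below $t\theta$ correctly surfaces an assumption the paper uses only implicitly, and it is precisely why this lemma, unlike its non-Horn counterpart Lemma~\ref{lem:par-sup-reduces-nh}, needs no additional hypothesis $(D'\lor t\eq t'\cdot\theta)\nlt_R(C\cdot\theta)$.
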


\begin{proof}
Since $t\theta$ is replaced by $t'\theta$
at all occurrences of $u$ or at or below all occurrences of $x$ in $C$,
one copy of the redex $t\theta$ is removed
from $\nm_R(C\cdot\theta)$.
Moreover all terms in $D'\theta$ are smaller than $t\theta$,
and consequently all redexes stemming from $D'\theta$ are smaller than
$t\theta$.
Therefore $\nm_R(C\cdot\theta)$ is larger than
$\nm_R(D' \lor C[t',\dots,t']_{p_1,\dots,p_k} \cdot \theta)$
or
$\nm_R(D' \lor C \cdot \theta[x \mapsto u[t'\theta]])$.
In the second case, this implies
$(C\cdot\theta) \ngt_R \concl(\iota)$ immediately.
In the first case, it implies
$(C\cdot\theta) \ngt_R
(D' \lor C[t',\dots,t']_{p_1,\dots,p_k} \cdot \theta)$
and $(C\cdot\theta) \ngt_R \concl(\iota)$ follows
using Lemma~\ref{lem:ordering-of-instances}.
\qed
\end{proof}

\subsubsection{Redundancy.}
We will now construct a redundancy criterion for
ground closure Horn superposition that is based on the
ordering(s) $\ngt_R$.

\begin{dfn}
\label{dfn:redundancy-clo}%
Let $N$ be a set of ground closures.
A ground closure $(C\cdot\theta)$ is called redundant \wrt~$N$,
if for every left-reduced ground rewrite system $R$ contained in $\succ$
we have
(i) $R \models (C\cdot\theta)$ or
(ii) there exists a ground closure $(D\cdot\theta) \in N$
such that $(D\cdot\theta) \nlt_R (C\cdot\theta)$
and $R \not\models (D\cdot\theta)$.
\end{dfn}

\begin{dfn}
\label{dfn:redundancy-inf}%
Let $N$ be a set of ground closures.
A ground inference $\iota$ with right or only premise $(C\cdot\theta)$
is called redundant \wrt~$N$,
if for every left-reduced ground rewrite system $R$ contained in $\succ$
we have
(i) $R \models \concl(\iota)$, or
(ii) there exists a ground closure $(C'\cdot\theta) \in N$
such that $(C'\cdot\theta) \nlt_R (C\cdot\theta)$
and $R \not\models (C'\cdot\theta)$, or
(iii) $\iota$ is a \textsc{Superposition} inference
with left premise $(D' \lor t \eq t' \cdot \theta)$
where $t\theta \succ t'\theta$,
and $(t\theta \to t'\theta) \notin R$,
or (iv)~$\iota$~is a \textsc{Superposition} inference
where the left premise is not the $\ngt_R$-minimal premise.
\end{dfn}

Intuitively, a redundant closure cannot be a minimal
counterexample, \ie, a minimal closure that is false in $R$.
A redundant inference is either irrelevant for the
completeness proof (cases (iii) and (iv)),
or its conclusion (and thus its right or only premise) is
true in $R$, provided that all closures that are $\ngt_R$-smaller
than the right or only premise are true in $R$ (cases (i) and (ii))
-- which means that the inference can be used
to show that the right or only premise cannot be a minimal counterexample.

We denote the set of redundant closures
\wrt~$N$ by $\RedC(N)$
and the set of redundant inferences by $\RedI(N)$.

\begin{exam}
Let $\succ$ be a KBO where all symbols have weight $1$.
Let $C = {g(b) \noteq c} \lor {f(c) \noteq d}$
and $C' = f(g(b)) \noteq d$.
Then the closure $(C \cdot \emptyset)$
is redundant \wrt~${\{(C' \cdot \emptyset)\}}$:
Let $R$ be a left-reduced ground rewrite system contained in $\succ$.
Assume that $C$ is false in $R$.
Then $g(b)$ and $c$ have the same $R$-normal form.
Consequently, every redex or normal form in
$\nm_R(C' \cdot \emptyset)$
was already present in
$\nm_R(C \cdot \emptyset)$.
Moreover, the labeled normal form $(c{\downarrow}_R,2)$
that is present in $\nm_R(C \cdot \emptyset)$
is missing in $\nm_R(C' \cdot \emptyset)$.
Therefore $(C \cdot \emptyset)
\ngt_R (C' \cdot \emptyset)$.
Besides, if $(C \cdot \emptyset)$ is false in $R$,
then $(C' \cdot \emptyset)$ is false as well.

Note that $C \precC C'$, therefore
$C$ is not classically redundant \wrt~$\{C'\}$.
\end{exam}

\begin{lem}
\label{lem:is-red-crit}
$(\RedI,\RedC)$ is a redundancy criterion
in the sense of
(Waldmann et al.~\cite{WaldmannTourretRobillardBlanchette2022}),
that is,
(1)
  if $N \models \bot$,
  then $N \setminus \RedC(N) \models \bot$;
(2)
  if $N \subseteq N'$, then $\RedC(N) \subseteq \RedC(N')$
  and $\RedI(N) \subseteq \RedI(N')$;
(3)
  if $N' \subseteq \RedC(N)$,
  then $\RedC(N) \subseteq \RedC(N \setminus N')$ and
  $\RedI(N) \subseteq \RedI(N \setminus N')$; and
(4)
  if $\iota$ is an inference with conclusion in $N$,
  then $\iota \in \RedI(N)$.
\end{lem}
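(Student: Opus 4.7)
The plan is to verify the four axioms in turn, noting that properties (2) and (4) follow directly from the definitions together with Lemmas~\ref{lem:eq-res-reduces} and~\ref{lem:par-sup-reduces}, whereas properties (1) and (3) both rely on a minimal-counterexample argument based on the well-foundedness of each ordering $\ngt_R$.

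For (2), monotonicity is immediate from Definitions~\ref{dfn:redundancy-clo} and~\ref{dfn:redundancy-inf}: conditions (i), (iii), and (iv) do not mention $N$ at all, and condition (ii) merely asks for the existence of a strictly smaller $R$-false witness in $N$, which persists under enlarging $N$. For (4), I split on the form of $\iota$. If $\iota$ is an \textsc{Equality Resolution} inference with $\concl(\iota) \in N$, Lemma~\ref{lem:eq-res-reduces} ensures that $\concl(\iota)$ is $\ngt_R$-smaller than the premise for every $R$; fixing $R$, we land in case (i) whenever $R \models \concl(\iota)$, and otherwise in case (ii) with witness $\concl(\iota)$ itself. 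If $\iota$ is a \textsc{Parallel Superposition} inference with left premise $(D' \lor t \eq t' \cdot \theta)$, then for each $R$ either $(t\theta \to t'\theta) \notin R$, in which case case (iii) of Definition~\ref{dfn:redundancy-inf} applies directly, or else Lemma~\ref{lem:par-sup-reduces} yields the required ordering against the right premise and the same (i)/(ii) split finishes the argument.

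For (1), assume $N \models \bot$ and pick an arbitrary admissible left-reduced ground rewrite system $R$ contained in $\succ$. Using well-foundedness of $\ngt_R$ together with the fact that $R$ cannot satisfy $N$, I choose a $\ngt_R$-minimal closure $(C\cdot\theta) \in N$ that is false in $R$. If $(C\cdot\theta)$ were in $\RedC(N)$, Definition~\ref{dfn:redundancy-clo} instantiated at this $R$ would yield either $R \models (C\cdot\theta)$, contradicting the choice of $(C\cdot\theta)$, or a strictly $\ngt_R$-smaller $R$-false closure in $N$, contradicting minimality. Hence $(C\cdot\theta) \in N \setminus \RedC(N)$ and $R$ fails to satisfy this set.

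Property (3) is the main obstacle; it applies the same minimality trick one level deeper. Given $N' \subseteq \RedC(N)$, a closure $(C\cdot\theta) \in \RedC(N)$, and an admissible $R$, only case (ii) needs rewriting, because cases (i), (iii), and (iv) are independent of $N$ and transfer verbatim. I pick a $\ngt_R$-minimal $(D\cdot\theta) \in N$ with $(D\cdot\theta) \nlt_R (C\cdot\theta)$ and $R \not\models (D\cdot\theta)$; such a $D$ exists because case (ii) holds for $(C\cdot\theta)$ at $R$. If $(D\cdot\theta)$ lay in $N'$, it would be in $\RedC(N)$, and applying the definition to $(D\cdot\theta)$ at $R$ would give either $R \models (D\cdot\theta)$ or a strictly smaller $R$-false closure in $N$, each contradicting the choice of $(D\cdot\theta)$. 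Thus $(D\cdot\theta) \in N \setminus N'$ witnesses case (ii) for $N \setminus N'$. The analogous argument for $\RedI(N) \subseteq \RedI(N \setminus N')$ proceeds identically, with cases (iii) and (iv) untouched.
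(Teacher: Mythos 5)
Your proof is correct and follows essentially the same approach as the paper: a minimal-counterexample argument over $\ngt_R$ for (1) and (3), the observation for (2) that conditions (i), (iii), (iv) do not reference $N$ while (ii) persists under enlarging $N$, and invoking Lemmas~\ref{lem:eq-res-reduces} and~\ref{lem:par-sup-reduces} for (4). The only presentational differences are that in (4) you case-split on the inference rule while the paper uniformly assumes (i), (iii), (iv) fail and derives (ii), and in (1) you take the $\ngt_R$-minimal false closure over all of $N$ rather than over $N \cap \RedC(N)$; both variants are sound.
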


\begin{proof}
(1) Suppose that $N \setminus \RedC(N) \not\models \bot$.
Then there exists a left-reduced ground rewrite system $R$ contained in $\succ$
such that
$R \models N \setminus \RedC(N)$.
We show that $R \models N$ (which implies $N \not\models \bot$).
Assume that $R \not\models N$. Then there exists a closure
$(C\cdot\theta) \in N \cap \RedC(N)$
such that $R \not\models (C\cdot\theta)$.
By well-foundedness of $\ngt_R$ there exists a $\ngt_R$-minimal closure
$(C\cdot\theta)$ with this property.
By definition of $\RedC(N)$,
there must be a ground closure $(D\cdot\theta) \in N$
such that $(D\cdot\theta) \nlt_R (C\cdot\theta)$
and $R \not\models (D\cdot\theta)$.
By minimality of $(C\cdot\theta)$, we get
$(D\cdot\theta) \in N \setminus \RedC(N)$,
contradicting the initial assumption.

(2) Obvious.

(3) Let $N' \subseteq \RedC(N)$
and let $(C\cdot\theta) \in \RedC(N)$.
We show that $(C\cdot\theta) \in \RedC(N \setminus N')$.
Choose $R$ arbitrarily.
If $R \models (C\cdot\theta)$, we are done.
Otherwise there exists a ground closure $(D\cdot\theta) \in N$
such that $(D\cdot\theta) \nlt_R (C\cdot\theta)$
and $R \not\models (D\cdot\theta)$.
By well-foundedness of $\ngt_R$ there exists a $\ngt_R$-minimal closure
$(D\cdot\theta)$ with this property.
If $(D\cdot\theta)$ were contained in $N'$ and hence in $\RedC(N)$,
there would exist
a ground closure $(D'\cdot\theta) \in N$
such that $(D'\cdot\theta) \nlt_R (D\cdot\theta)$
and $R \not\models (D'\cdot\theta)$, contradicting minimality.
Therefore $(D\cdot\theta) \in N \setminus N'$ as required.
The second part of (3) is proved analogously.

(4) Let $\iota$ be an inference with $\concl(\iota) \in N$.
Choose $R$ arbitrarily.
We have to show that
$\iota$ satisfies part (i), (ii), (iii), or (iv) of
Def.~\ref{dfn:redundancy-inf}.
Assume that (i), (iii), and (iv) do not hold.
Then $R \not\models \concl(\iota)$,
and
by Lemmas~\ref{lem:eq-res-reduces} and \ref{lem:par-sup-reduces},
$\concl(\iota)$ is $\ngt_R$-smaller than the right or only premise of $\iota$,
therefore part (ii) is satisfied
if we take $\concl(\iota)$ as $(C'\cdot\theta)$.
\qed
\end{proof}

\subsubsection{Constructing a Candidate Interpretation.}

In usual completeness proofs for superposition-like calculi,
one constructs a candidate interpretation (a set of ground rewrite rules)
for a saturated set of
ground clauses by induction over the \emph{clause ordering}.
In our case, this is impossible since
the limit \emph{closure ordering} depends on the generated set of
rewrite rules itself.
We can still construct the candidate interpretation by induction over the
\emph{term ordering}, though:
Instead of inspecting ground closures one by one as in the
classical construction,
we inspect all ground closures $(C\cdot\theta)$
for which $C\theta$ contains the maximal term $s$
simultaneously,
and if for at least one of them the usual conditions for
productivity are satisfied,
we choose the $\ngt_{R_s}$-smallest one of these
to extend $R_s$.

Let $N$ be a set of ground closures.
For every ground term $s$ we define $R_s = \bigcup_{t \prec s} E_t$.
Furthermore we define $E_s = \{ s \to s' \}$,
if $(C \cdot \theta)$ is the $\ngt_{R_s}$-smallest
closure in $N$ such that
$C = C' \lor u \eq u'$,
$s = u\theta$ is a strictly maximal term in $C\theta$, occurs only in a
positive literal of $C\theta$, and is
irreducible \wrt~$R_s$,
$s' = u'\theta$,
$C\theta$ is false in $R_s$, and
$s \succ s'$,
provided that such a closure $(C \cdot \theta)$ exists.
We say that $(C \cdot \theta)$ \emph{produces} $s \to s'$.
If no such closure exists, we define $E_s = \emptyset$.
Finally, we define $R_* = \bigcup_t E_t$.

The following two lemmas are proved as usual:

\begin{lem}
\label{lem:modconstr-monotonic-1}%
Let $s$ be a ground term, let $(C\cdot\theta)$ be a closure.
If every term that occurs in negative literals of $C\theta$ is smaller
than $s$ and every term that occurs in positive literals of $C\theta$
is smaller than or equal to $s$, and if $R_s \models (C\cdot\theta)$,
then $R_* \models (C\cdot\theta)$.
\end{lem}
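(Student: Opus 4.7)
The plan is to unfold the hypothesis $R_s \models (C \cdot \theta)$ as: some literal $L$ of $C\theta$ is satisfied by $R_s$, and then show that the same literal is satisfied by $R_*$. Before the case analysis, I would establish the structural observation that every rule in $R_* \setminus R_s$ has left-hand side $\succeq s$. This follows directly from the construction: each $E_t$ is either empty or of the form $\{t \to t'\}$, $R_s = \bigcup_{t \prec s} E_t$, and $R_* = \bigcup_t E_t$, so $R_* \setminus R_s = \bigcup_{t \succeq s} E_t$, and every rule in this union has left-hand side $\succeq s$.

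If $L$ is a positive literal $u \eq v$, the conclusion is immediate: $R_s \subseteq R_*$, so the congruence generated by $R_s$ is contained in the congruence generated by $R_*$, and hence $R_s \models u \eq v$ implies $R_* \models u \eq v$. The bound $u, v \preceq s$ plays no role in this case.

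If $L$ is a negative literal $u \noteq v$, I would use the strict bound $u, v \prec s$ to argue that no rule in $R_* \setminus R_s$ can be applied at any subterm of $u$ or $v$: every such subterm $w$ satisfies $w \preceq u \prec s$ or $w \preceq v \prec s$, and since $R$ is ground, a rule can be applied at $w$ only if its left-hand side equals $w$, which is impossible as each rule in $R_* \setminus R_s$ has left-hand side $\succeq s \succ w$. Consequently the $R_*$-normal forms of $u$ and $v$ coincide with their $R_s$-normal forms. Since $R$ is contained in the reduction ordering $\succ$ and is left-reduced and ground, both $R_s$ and $R_*$ are confluent and terminating, so normal forms are unique; therefore $R_s \models u \noteq v$ (\ie, $u$ and $v$ have distinct $R_s$-normal forms) transfers to $R_* \models u \noteq v$.

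The main point requiring care is the negative case, and specifically the use of the strict inequality $w \prec s$ to exclude applications of the new rules. This also explains the asymmetry in the hypothesis between positive literals (where $\preceq s$ suffices) and negative literals (where strict $\prec s$ is necessary): a fresh rule with left-hand side $s$ could reduce a positive-literal side equal to $s$, but such a reduction would not affect the truth of the equation, whereas on the negative side any additional rewrite could in principle collapse two formerly distinct normal forms.
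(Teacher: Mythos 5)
Your proof is correct and is precisely the standard argument: the paper states this lemma without a proof, remarking only that the two monotonicity lemmas are ``proved as usual.'' Your unfolding---split on the literal of $C\theta$ satisfied by $R_s$, handle the positive case trivially via $R_s \subseteq R_*$, and use the strict bound $\prec s$ together with the observation that every rule in $R_* \setminus R_s$ has left-hand side $\succeq s$ to show that $R_s$- and $R_*$-normal forms of terms in negative literals coincide (noting that the subterm property holds for a reduction ordering total on ground terms, and that rewriting only decreases terms so the argument propagates along the whole normalization)---is exactly that usual argument.
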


\begin{lem}
\label{lem:modconstr-monotonic-2}%
If a closure $(C' \lor u \eq u' \cdot \theta)$ produces
$u\theta \to u'\theta$,
then $R_* \models (C' \lor {u \eq u'} \cdot \theta)$
and $R_* \not\models (C' \cdot \theta)$.
\end{lem}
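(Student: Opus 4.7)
My plan is to dispatch the two claims separately, with all the substance going into the second. For $R_* \models (C' \lor u \eq u' \cdot \theta)$, I invoke the definition of productivity directly: writing $s = u\theta$ and $s' = u'\theta$, productivity of the closure contributes the rule $s \to s' \in E_s \subseteq R_*$, and therefore $R_* \models u \eq u'$, which makes the whole clause satisfied.

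For $R_* \not\models (C' \cdot \theta)$, I start from the productivity hypothesis $R_s \not\models (C' \lor u \eq u' \cdot \theta)$, which implies that every literal of this clause is false in $R_s$. In the Horn setting $C'$ is a disjunction of negative literals $v \noteq v'$, and each being false in $R_s$ means $v{\downarrow}_{R_s} = v'{\downarrow}_{R_s}$. The key ordering observation is that for every such literal one has $v, v' \prec s$: strict maximality of $u \eq u'$ in $(C' \lor u \eq u')\theta$ gives $\{v, v, v', v'\} \prec_{\mul} \{s, s'\}$, while the productivity requirement that $s$ occurs only in positive literals of $C\theta$ rules out $v = s$ and $v' = s$, leaving $v, v' \prec s$. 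Combined with the fact that every rule in $R_* \setminus R_s$ has left-hand side $\succeq s$ (since $R_* \setminus R_s = \bigcup_{t \succeq s} E_t$ by the inductive construction, and each $E_t$ contains only rules with left-hand side $t$), the subterm property of $\succ$ ensures that no such rule can reduce any term $\prec s$ or any of its reducts. Hence $v{\downarrow}_{R_*} = v{\downarrow}_{R_s}$ and $v'{\downarrow}_{R_*} = v'{\downarrow}_{R_s}$, the equality $v{\downarrow}_{R_s} = v'{\downarrow}_{R_s}$ persists, every negative literal of $C'\theta$ remains false in $R_*$, and $R_* \not\models (C' \cdot \theta)$ follows.

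The main obstacle I expect is justifying cleanly that rules from $R_* \setminus R_s$ cannot disturb normal forms of terms $\prec s$. This depends on two standard but easy-to-overlook facts: first, rules added to $R_*$ beyond $R_s$ come exclusively from $E_t$ with $t \succeq s$, so their left-hand sides are $\succeq s$; second, the reduction ordering $\succ$ has the subterm property, so no subterm of a term $\prec s$ can be $\succeq s$ and every reduct of such a term remains $\prec s$. Both facts are routine in the Bachmair--Ganzinger framework, but deserve explicit mention in this closure-based setting where the ordering $\ngt_R$ itself depends on the candidate rewrite system being constructed. A secondary point to be careful about is extracting the strict inequality $v, v' \prec s$ (rather than $\preceq$) from the multiset comparison, which uses both strict maximality and the ban on $s$ in negative literals.
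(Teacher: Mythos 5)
The paper dispatches this lemma with ``proved as usual,'' and your proposal is exactly the usual Bachmair--Ganzinger argument, so it matches the intended approach. Two small remarks. First, you misquote the productivity condition: the paper requires that $s = u\theta$ be the \emph{strictly maximal term} of $C\theta$ and occur only in a positive literal, not that $u\theta \eq u'\theta$ be a strictly maximal literal; from the stated term condition you get $v\theta, v'\theta \prec s$ for every negative literal $v\theta \noteq v'\theta$ of $C'\theta$ immediately, without the detour through the multiset extension $\succL$ (which you would anyway have had to derive from the term condition first). Second, the substantive step -- that $R_* \setminus R_s = \bigcup_{t\succeq s}E_t$ consists of rules with left-hand sides $\succeq s$, so by the subterm property they cannot touch any term $\prec s$ or any of its reducts, whence the $R_s$-joinability $v\theta{\downarrow}_{R_s}=v'\theta{\downarrow}_{R_s}$ persists to $R_*$ -- is correct and is the heart of the argument; you are right to flag it explicitly, since in this paper the closure ordering $\ngt_R$ depends on $R$ and one must check that nothing in the construction breaks the usual invariants. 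Note also that the lemma is restated for the non-Horn case later (where $C'$ may contain positive literals), and there the extra productivity condition that $C'\theta$ be false in $R_s \cup \{s\to s'\}$ does the corresponding work; your proof covers only the Horn case, which is what the statement in its context requires.
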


\begin{lem}
\label{lem:modconstr-orderings-agree}%
Let $(C_1\cdot\theta)$ and $(C_2\cdot\theta)$ be two closures.
If $s$ is a strictly maximal term and occurs only positively in both
$C_1\theta$ and $C_2\theta$,
then $(C_1\cdot\theta) \ngt_{R_s} (C_2\cdot\theta)$
if and only if
$(C_1\cdot\theta) \ngt_{R_*} (C_2\cdot\theta)$.
\end{lem}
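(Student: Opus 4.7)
The plan is to reduce the claim to a statement about $\nm_R$ and then analyse how $R_s$ and $R_*$ differ on the multisets involved. Since $\ngt_R$ is a lexicographic combination whose second and third components do not depend on $R$, it is enough to show that the multiset extension $(({\succ},{>})_\lex)_\mul$ yields the same verdict on $\nm_{R_s}(C_1\cdot\theta),\nm_{R_s}(C_2\cdot\theta)$ as on $\nm_{R_*}(C_1\cdot\theta),\nm_{R_*}(C_2\cdot\theta)$. I would begin by observing that every rule in $R_*\setminus R_s$ has left-hand side $\succeq s$, and that rules with LHS strictly $\succ s$ cannot fire during the normalization of any term $\preceq s$: reductions shrink terms, subterms of a term $\preceq s$ are themselves $\preceq s$, and so cannot equal an LHS $\succ s$. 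Hence the only rule of $R_*\setminus R_s$ that can possibly affect $\nm$ is the one in $E_s$, if it exists.

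If $E_s=\emptyset$, then $\nm_{R_s}=\nm_{R_*}$ on the two closures and the claim is immediate. Otherwise $E_s=\{s\to s'\}$, and the productivity condition forces $s$ to be $R_s$-irreducible; hence every proper subterm of $s$ is irreducible under $R_s$, and also under $R_*$ (the new rule $s\to s'$ applies only to a whole term equal to $s$). Combined with $s$ being strictly maximal (so $s$ is never a proper subterm of anything in $C_i\theta$) and occurring only positively (so $s$ lies in no negative literal), this forces every syntactic subterm $t$ of $C_i$ with $t\theta=s$ to be a top term of a positive literal of $C_i$; thus $(t,0)$ lies in both $\lts(C_i)$ and $\lss(C_i)$, and since the direct subterms of $s$ are already $R_*$-normal, the auxiliary term evaluated in the $\lss$-clause of the definition of $\nm_R$ is exactly $s$.

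Let $\beta_i$ denote the number of such top occurrences and let $M_{<s}(C_i)$ collect all remaining contributions to $\nm_R(C_i\cdot\theta)$; these remaining contributions involve only terms $\prec s$ and are thus identical for $R_s$ and $R_*$. A direct calculation then shows that $\nm_{R_s}(C_i\cdot\theta)$ is $M_{<s}(C_i)$ together with $\beta_i$ copies of the labeled term $(s,0)$, and that $\nm_{R_*}(C_i\cdot\theta)$ is $\nm_{R_s}(C_i\cdot\theta)$ together with $\beta_i$ copies of $\{(s{\downarrow}_{R_*},0)\}\cup\rrm_{R_*}(s',0)$, a multiset whose elements all have term component $\preceq s'\prec s$. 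In particular $(s,0)$ is the $({\succ},{>})_\lex$-maximum of $\nm_R(C_i\cdot\theta)$ in both settings, with the same multiplicity $\beta_i$. Hence the multiset-extension comparison of $(C_1\cdot\theta)$ and $(C_2\cdot\theta)$ agrees for $R_s$ and $R_*$: if $\beta_1\ne\beta_2$ the excess copies of $(s,0)$ dominate in the same direction in both cases, and if $\beta_1=\beta_2$ the $(s,0)$-blocks cancel, the $R_*$-specific blocks cancel symmetrically, and the decision reduces in both cases to the comparison of $M_{<s}(C_1)$ with $M_{<s}(C_2)$.

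I expect the main obstacle to be justifying the counting identity used above, namely that each top-level occurrence of $s$ in $C_i\theta$ contributes exactly one copy of $(s,0)$ via $\lts$ \emph{and} one copy via $\rrm_R$ in $\lss$. This is the only place where the full strength of productivity is needed: without the $R_s$-irreducibility of $s$ (and hence of its proper subterms) the $\lss$-auxiliary term could strictly drop below $s$, desynchronising the two counts and potentially splitting the $R_s$- and $R_*$-comparisons. Everything else---the inertness of rules with LHS $\succ s$, the identification of $M_{<s}$, and the multiset cancellation---is bookkeeping.
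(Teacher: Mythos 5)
Your proposal is correct and follows essentially the same route as the paper: observe that rules with left-hand side strictly above $s$ are inert for the closures in question, dispose of the case $E_s=\emptyset$, and in the case $E_s=\{s\to s'\}$ show that $\nm_{R_*}(C_i\cdot\theta)$ is obtained from $\nm_{R_s}(C_i\cdot\theta)$ by adjoining a fixed collection of labeled terms all strictly below $(s,0)$, so the multiset comparison is preserved; the second and third components of $\ngt_R$ are $R$-independent. The one place where you do more work than the paper is the counting parameter $\beta_i$: the paper simply asserts that each $\nm_{R_s}(C_i\cdot\theta)$ contains exactly one copy of $(s,0)$, which is forced by the hypothesis that $s$ is \emph{strictly} maximal in $C_i\theta$ (two distinct subterm occurrences of $C_i$ grounding to $s$, or two top-term elements $(t,0),(t',0)\in\lts(C_i)$ with $t\theta=t'\theta=s$, would contradict strict maximality, and a single syntactic top term repeated across literals still yields one set element). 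So your worry about $\beta_1\ne\beta_2$ does not actually arise, but your argument handles it gracefully and is sound either way.
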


\begin{proof}
If $E_s = \emptyset$, then
$\nm_{R_s}(C_i\cdot\theta) = \nm_{R_*}(C_i\cdot\theta)$ ($i \in \{1,2\}$),
therefore trivially
$(C_1\cdot\theta) \ngt_{R_s} (C_2\cdot\theta)$ if and only if
$(C_1\cdot\theta) \ngt_{R_*} (C_2\cdot\theta)$.
It remains to consider the case $E_s = \{s \to s'\}$.
Note that $s$ is then irreducible \wrt~$\ngt_{R_s}$,
that its proper subterms are irreducible \wrt~$\ngt_{R_*}$,
and that both $\nm_{R_s}(C_1\cdot\theta)$ and $\nm_{R_s}(C_2\cdot\theta)$
contain exactly one copy of $(s,0)$.
For $i \in \{1,2\}$ we obtain $\nm_{R_*}(C_i\cdot\theta)$
from $\nm_{R_s}(C_i\cdot\theta)$ by adding the labeled redexes that
occur in the $R_*$-normalization of $s$ and the
labeled $R_*$-normal form of $s$.
Since adding the same multiset to both
$\nm_{R_s}(C_1\cdot\theta)$ and $\nm_{R_s}(C_2\cdot\theta)$
does not change the ordering of these multisets,
the result follows.
\qed
\end{proof}

\begin{lem}
\label{lem:productive-Dtheta-smaller-Ctheta}%
Let $(D\cdot\theta) = (D' \lor {t \eq t'} \cdot \theta)$
and $(C\cdot\theta)$ be two closures in $N$.
If $(D\cdot\theta)$ produces $t\theta \to t'\theta$ in $R_*$,
and $t\theta$ occurs in $C\theta$
in a negative literal
or below the top a term in a positive literal,
then $(D\cdot\theta) \nlt_{R_*} (C\cdot\theta)$
and $D\theta \prec_C C\theta$.
\end{lem}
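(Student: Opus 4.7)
The plan is to prove the two conclusions separately, using the observations that (a) $t\theta$ is strictly maximal in $D\theta$ and productive, so by left-reducedness of $R_*$ it is $R_*$-irreducible except by the rule $t\theta \to t'\theta$ applied at the top, and (b) $t\theta$ must reappear as a \emph{non-top} subterm inside $C\theta$, where it will contribute a real redex (not merely a normal form) to $\nm_{R_*}(C\cdot\theta)$.

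For the normalization closure ordering part, I first analyze $\nm_{R_*}(D\cdot\theta)$. By strict maximality of $(t\eq t')\theta$ every literal of $D'\theta$ has all its terms $\prec t\theta$; moreover $t\theta$ cannot occur as a proper subterm of any term in $D\theta$ (that would force a $\succ_L$-larger literal). Hence $t$ contributes to $D$ only through $\lts$, giving the normal form $(t\theta{\downarrow}_{R_*},0) = (t'\theta{\downarrow}_{R_*},0)$. The proper subterms of $t$, which sit in $\lss(D)$, instantiate to $R_*$-irreducible terms (left-reducedness applied to the rule $t\theta\to t'\theta$), so they contribute no redexes. Everything in $\nm_{R_*}(D\cdot\theta)$ therefore has first component $\prec t\theta$. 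Next I locate a witness in $\nm_{R_*}(C\cdot\theta)$: pick a position $q$ in $C\theta$ with $C\theta|_q = t\theta$ that sits in a negative literal or properly below the top of a positive one, and let $p$ be the maximal prefix of $q$ that is a position in $C$. In the non-variable case $C|_p = T$ is a compound term with $T\theta = t\theta$; then $T\in\lss(C)$ with label $m'\in\{1,2\}$, and since the $\theta$-instantiated arguments of $T$ are proper subterms of $t\theta$ and therefore $R_*$-irreducible, the normalized version of $T\theta$ equals $t\theta$, so $\rrm_{R_*}$ uses the rule $t\theta\to t'\theta$ at the top and records $(t\theta, m')$ with $m'\ge 1$. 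In the variable case $C|_p = x$ with $t\theta$ a proper subterm of $x\theta$, giving $(x,m')\in\lss(C)$ with $m'\in\{1,2\}$; normalizing $x\theta$ must eventually fire $t\theta\to t'\theta$ at a non-root position, and since $m'>0$ the redex is recorded as $(t\theta,m')$ with the label preserved.

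In either case $(t\theta, m')\in\nm_{R_*}(C\cdot\theta)$ is strictly larger under $({\succ},{>})_\lex$ than every element of $\nm_{R_*}(D\cdot\theta)$. A standard multiset-extension argument (take this entry as the witness in $A\setminus B$) gives $\nm_{R_*}(C\cdot\theta)\mathrel{(({\succ},{>})_\lex)_\mul}\nm_{R_*}(D\cdot\theta)$, hence $(D\cdot\theta)\nlt_{R_*}(C\cdot\theta)$ directly from the first component of the lexicographic definition of $\ngt_{R_*}$.

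For the classical part $D\theta\prec_C C\theta$, I compare literal multisets. The literal $L_C$ of $C\theta$ that contains the occurrence of $t\theta$ has its literal-ordering multiset dominating $\{t\theta,t'\theta\}$: in the negative case it is of the form $\{s,s,r,r\}$ with $\{s,r\}\ni t\theta$, and cancelling one $t\theta$ leaves something still $\succ_\mul\{t'\theta\}$; in the positive case $t\theta$ sits strictly below the top, so one side is $\succ t\theta$ and dominates $\{t\theta,t'\theta\}$. Since $(t\eq t')\theta$ is strictly maximal in $D\theta$, every literal of $D\theta$ is $\preceq_L\{t\theta,t'\theta\}\prec_L L_C$, and $L_C$ is not a literal of $D\theta$ (the maximal candidate $t\theta\eq t'\theta$ has the wrong polarity or wrong top). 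Multiset extension then yields $C\theta\succ_C D\theta$.

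The main technical obstacle is step two of the first part: ensuring that the reduction of $t\theta$ inside $C\theta$ really shows up as a labeled redex of label at least $1$ in $\nm_{R_*}(C\cdot\theta)$, regardless of whether the occurrence comes through a non-variable position or through a variable substitution, and regardless of which reduction order computes $\rrm_{R_*}$. The key facts that make this go through are Lemma~\ref{lem:rrm-welldefined} (well-definedness of $\rrm_{R_*}$), left-reducedness of $R_*$ (which pins $t\theta$ as the unique LHS that can fire wherever it appears as a subterm), and the $m>0$ case of the definition of $\rrm_R$, which preserves the label on redexes not fired at the root.
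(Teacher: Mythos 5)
Your overall strategy matches the paper's: find the labeled redex $(t\theta,m')$ with $m'\ge 1$ in $\nm_{R_*}(C\cdot\theta)$ as the multiset witness, and for the $\succC$-part compare literal multisets via the occurrence of $t\theta$ in a negative or non-top position. However, there is a genuine error in your analysis of $\nm_{R_*}(D\cdot\theta)$, and a couple of smaller gaps in the case analysis for $C$.

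The claim that ``$t$ contributes to $D$ only through $\lts$'' and hence ``everything in $\nm_{R_*}(D\cdot\theta)$ has first component $\prec t\theta$'' is false. Look again at Definition~\ref{dfn:ss-ts-lss-rm}: the third piece of $\lss(C)$ re-admits terms that occur \emph{only} at the top of positive literal sides, with label $0$. Since $t$ is at the top of a positive side of $D$, is not a proper subterm of any positive side (by strict maximality of $t\theta$), and is not in $\ulss^{-}(D)$ (productivity requires $t\theta$ to occur only positively), we get $(t,0)\in\lss(D)$. Its contribution to $\nm_{R_*}(D\cdot\theta)$ is $\rrm_{R_*}(t\theta,0)$ (the arguments of $t$ instantiate to $R_*$-irreducible terms by left-reducedness, so the argument-normalized term is $t\theta$ itself), and this contains the \emph{redex} $(t\theta,0)$, which has first component $=t\theta$. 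This is precisely the ``labeled redex $(t\theta,0)$'' that the paper identifies as the largest element of $\nm_{R_*}(D\cdot\theta)$. Your conclusion survives, but for a different reason than you gave: the witness $(t\theta,m')$ with $m'\ge 1$ still strictly dominates $(t\theta,0)$ because the comparison is $({\succ},{>})_\lex$ on labeled terms and $m'>0$. In other words, the argument is saved by the labels, not by strict term dominance; this is exactly why the labels exist.

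Two smaller gaps in the location of the witness. First, you treat only the subcase where $t\theta$ is a \emph{proper} subterm of $x\theta$; the subcase $x\theta = t\theta$ (the maximal $C$-prefix position is the occurrence itself and $C|_q$ is a variable) is missing. It is easy: $(x,m')\in\lss(C)$ with $m'\in\{1,2\}$, and $\rrm_{R_*}(x\theta,m')$ fires $t\theta\to t'\theta$ at the root, yielding $(t\theta,m')$. Second, in the proper-subterm case you assert $m'\in\{1,2\}$, but if $x$ occurs only at the top of a positive literal side of $C$ while $t\theta$ sits strictly inside $x\theta$, then $(x,0)\in\lss(C)$. The hypothesis of the lemma is still satisfied (the $t\theta$-occurrence in $C\theta$ is below a positive top), so this subcase is real. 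It is handled by the $m=0$, $p>\epsilon$ clause of $\rrm_R$, which \emph{bumps} the label to $1$, giving $(t\theta,1)$ -- not by ``preserving'' a label that was already positive. You mention that clause in your closing remark, but your case analysis never actually invokes it. Once these three points are repaired, the proof coincides with the paper's.
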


\begin{proof}
If $(D\cdot\theta)$ produces $t\theta \to t'\theta$ in $R_*$,
then $t\theta$ is the strictly largest term in $D\theta$ and occurs
only in a positive literal.
Therefore the labeled redex $(t\theta,0)$ must be the largest element of
$\nm_{R_*}(D\cdot\theta)$.

If $t\theta$ occurs in $C\theta$ in a negative literal or
below the top of a positive literal,
then $\nm_{R_*}(C\cdot\theta)$ contains $(t\theta,2)$ or $(t\theta,1)$,
hence it is larger than $\nm_{R_*}(D\cdot\theta)$.
Consequently, $(D\cdot\theta) \nlt_{R_*} (C\cdot\theta)$.
Besides, the literal in which $t\theta$ occurs in $C\theta$
is larger than $t\theta \eq t'\theta$ \wrt~$\succL$,
and since $t\theta \eq t'\theta$ is 
larger than every literal of $D'\theta$,
we obtain $D\theta \prec_C C\theta$.
\qed
\end{proof}

\begin{lem}
\label{lem:productive-Dtheta-smaller-Ctheta-top-pos}%
Let $(D\cdot\theta) = (D' \lor {t \eq t'} \cdot \theta)$
and $(C\cdot\theta)$ be two closures in $N$.
If $(D\cdot\theta)$ produces $t\theta \to t'\theta$ in $R_*$,
$t\theta$ occurs in $C\theta$
at the top of the strictly maximal side of a positive maximal literal,
and $R_* \not\models (C\cdot\theta)$,
then $(D\cdot\theta) \nlt_{R_*} (C\cdot\theta)$.
\end{lem}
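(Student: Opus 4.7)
The plan is to split into cases based on where $t\theta$ appears in $C\theta$. If $t\theta$ also occurs somewhere in $C\theta$ inside a negative literal or strictly below the top of a positive literal, then Lemma~\ref{lem:productive-Dtheta-smaller-Ctheta} applies at once and yields $(D\cdot\theta) \nlt_{R_*} (C\cdot\theta)$. I would dispatch this case with a one-line appeal to that lemma.

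The interesting case is when $t\theta$ occurs in $C\theta$ only at the specified position: the top of the strictly maximal side of the positive maximal literal, say $t\theta \eq s'$. Since this section treats Horn clauses, $C\theta$ contains at most one positive literal; hence the positive maximal literal of $C\theta$ is unique, so it is in fact strictly maximal, and as a consequence $t\theta$ is strictly maximal in $C\theta$ and occurs only positively. The key of the proof is then to observe that $(C\cdot\theta)$ is itself a legitimate candidate producer of a rule $t\theta \to s'$ in the construction of $R_*$. I would check each clause of the productivity requirement: strict maximality and positive-only occurrence of $t\theta$ come from the case assumption together with the Horn restriction; the inequality $t\theta \succ s'$ is built into the hypothesis; irreducibility of $t\theta$ with respect to $R_{t\theta}$ follows from left-reducedness of $R_*$ applied to the rule $t\theta \to t'\theta$ (no proper subterm of $t\theta$ is $R_*$-reducible, and $R_{t\theta}$ contains only rules with left-hand sides $\prec t\theta$); and the falsity of $(C\cdot\theta)$ in $R_{t\theta}$ is obtained by the contrapositive of Lemma~\ref{lem:modconstr-monotonic-1}, whose term-size hypotheses are satisfied because every term of $C\theta$ is bounded by $t\theta$ in the appropriate way.

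Once $(C\cdot\theta)$ is a valid candidate producer, I would invoke the fact that $(D\cdot\theta)$ was selected as the $\ngt_{R_{t\theta}}$-minimal candidate for a rule with left-hand side $t\theta$: this forces $(D\cdot\theta) \nlt_{R_{t\theta}} (C\cdot\theta)$, provided the two closures are distinct. Distinctness is ensured because $R_* \models (D\cdot\theta)$ by Lemma~\ref{lem:modconstr-monotonic-2} while $R_* \not\models (C\cdot\theta)$ by hypothesis. The final step is to translate the strict inequality from $\ngt_{R_{t\theta}}$ to $\ngt_{R_*}$ through Lemma~\ref{lem:modconstr-orderings-agree}, whose hypotheses are met exactly because $t\theta$ is strictly maximal and occurs only positively in both $C\theta$ and $D\theta$.

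The main obstacle, conceptually, is recognizing that the Horn hypothesis is precisely what allows one to upgrade the maximality of the positive literal in $C\theta$ to strict maximality, which is the condition needed to apply Lemma~\ref{lem:modconstr-orderings-agree}. Without the Horn restriction, duplicate copies of the maximal positive literal of $C\theta$ would spoil this step and a more intricate direct comparison of the $\nm_{R_*}$ multisets would become necessary.
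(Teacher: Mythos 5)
Your proposal is correct and follows essentially the same route as the paper: handle the case where $t\theta$ also occurs negatively or below the top by Lemma~\ref{lem:productive-Dtheta-smaller-Ctheta}, then in the remaining case observe that $(C\cdot\theta)$ is itself a candidate producer (using irreducibility of $t\theta$ w.r.t.\ $R_{t\theta}$ and the contrapositive of Lemma~\ref{lem:modconstr-monotonic-1}), invoke $\ngt_{R_{t\theta}}$-minimality of the actual producer $(D\cdot\theta)$, and transfer the inequality to $\ngt_{R_*}$ via Lemma~\ref{lem:modconstr-orderings-agree}. You spell out a few steps the paper leaves implicit, in particular that the Horn restriction is what upgrades the positive maximal literal to strictly maximal and hence makes $t\theta$ a strictly maximal, positively occurring term; that observation is correct and helpful, but it does not change the underlying argument.
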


\begin{proof}
If $(D\cdot\theta)$ produces $t\theta \to t'\theta$ in $R_*$,
then $t\theta$ is the strictly largest term in $D\theta$ and occurs
only in a positive literal.
If $t\theta$ occurs in $C\theta$ only at the top
of the strictly maximal side of
a positive maximal literal, then $\ngt_{R_s}$ and $\ngt_{R_*}$
agree on $(D\cdot\theta)$ and $(C\cdot\theta)$
by Lemma~\ref{lem:modconstr-orderings-agree};
furthermore $R_* \not\models (C\cdot\theta)$ implies
$R_s \not\models (C\cdot\theta)$.
Then $(D\cdot\theta) \nlt_{R_*} (C\cdot\theta)$
folllows from the $\ngt_{R_s}$-minimality condition for
productive closures.
\qed
\end{proof}

We can now show that the Ground Closure Horn Superposition Calculus
is refutationally complete: 

\begin{thm}
\label{thm:rstar-is-model}%
Let $N$ be a saturated set of ground closures
that does not contain $(\bot\cdot\theta)$.
Then $R_* \models N$.
\end{thm}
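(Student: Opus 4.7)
The plan is to run the standard superposition-style minimal-counterexample induction, but along the closure ordering $\ngt_{R_*}$ rather than along the classical clause ordering. Assume $R_* \not\models N$; by well-foundedness of $\ngt_{R_*}$ I may choose a $\ngt_{R_*}$-minimal closure $(C\cdot\theta) \in N$ with $R_* \not\models (C\cdot\theta)$, and since $(\bot\cdot\theta) \notin N$ this closure has a maximal literal $L$ which is false in $R_*$. In each case below I identify an inference $\iota$ applicable to $(C\cdot\theta)$ (together with a suitable second premise drawn from $N$ when $\iota$ is binary) whose conclusion is $\ngt_{R_*}$-smaller than $(C\cdot\theta)$ and still false in $R_*$, and which is not redundant via clauses (iii) or (iv) of Definition~\ref{dfn:redundancy-inf}; saturation then forces clause (ii), supplying a $\ngt_{R_*}$-smaller counterexample in $N$, which contradicts minimality.

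If $L$ is negative, write the underlying literal of $C$ as $s \noteq s'$. When $s\theta = s'\theta$, Equality Resolution applies: Lemma~\ref{lem:eq-res-reduces} gives a $\ngt_{R_*}$-smaller conclusion whose ground image equals $C'\theta$ by idempotence of the mgu, hence is false in $R_*$; cases (iii) and (iv) do not apply to Equality Resolution and (i) is ruled out by falsity. When $s\theta \succ s'\theta$, falsity of $L$ in $R_*$ forces $s\theta$ to be $R_*$-reducible by some rule $t\theta \to t'\theta \in R_*$; this rule originates from a producing closure $(D' \lor t \eq t' \cdot \theta) \in N$ (after renaming variables and merging the two premises' substitutions). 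Depending on whether the redex position in $s\theta$ corresponds to a non-variable position of $s$ or to a position inside a variable instance, I apply Parallel Superposition~I or~II; Lemma~\ref{lem:par-sup-reduces} yields a $\ngt_{R_*}$-smaller conclusion, replacement of $t\theta$ by an $R_*$-equal term preserves falsity of $C\theta$, and Lemma~\ref{lem:productive-Dtheta-smaller-Ctheta} shows the left premise is $\ngt_{R_*}$-smaller than the right premise, so redundancy cases (iii) and (iv) are excluded.

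If $L$ is positive, Horn-ness makes the underlying literal $s \eq s'$ the unique positive literal of $C$; falsity of $L$ gives $s\theta \succ s'\theta$, and a short multiset-literal calculation using maximality of $L$ in $C\theta$ shows that $s\theta$ is strictly maximal in $C\theta$ and occurs only in positive literals. If $s\theta$ is $R_*$-reducible I apply Parallel Superposition as in the negative case, using Lemma~\ref{lem:productive-Dtheta-smaller-Ctheta-top-pos} in place of Lemma~\ref{lem:productive-Dtheta-smaller-Ctheta} to compare the premises in the subcase where the overlap sits at the top of the maximal positive literal. If $s\theta$ is $R_*$-irreducible, then $(C\cdot\theta)$ satisfies every productivity condition for a rule $s\theta \to s'\theta$: the only non-obvious one, that $C\theta$ is false already in $R_{s\theta}$, holds because any rule in $R_* \setminus R_{s\theta}$ has left-hand side $\succeq s\theta$, the case of equality is excluded by irreducibility of $s\theta$ in $R_*$, and hence no such rule can rewrite a term $\preceq s\theta$ occurring in $C\theta$. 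Consequently the $\ngt_{R_{s\theta}}$-minimal closure satisfying these productivity conditions exists and produces a rule with left-hand side $s\theta$, contradicting $R_*$-irreducibility of $s\theta$.

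I expect the main obstacle to be the positive Parallel Superposition subcase: verifying that all side conditions of Parallel Superposition~I or~II (maximality of the overlap literal, strict maximality of $(t \eq t')\theta$ in the left premise, and the orientation inequalities on $t,t',s,s'$) are genuinely met in each overlap configuration, and picking the correct productivity lemma, \ref{lem:productive-Dtheta-smaller-Ctheta} versus \ref{lem:productive-Dtheta-smaller-Ctheta-top-pos}, according to whether the overlap is below the top or at the top of a positive literal. The Horn assumption enters only in the positive case, where it is needed to guarantee that $s\theta$ is strictly maximal in $C\theta$; without it one would additionally need Equality Factoring.
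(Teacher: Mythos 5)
Your proof is correct and follows essentially the same approach as the paper's: minimal-counterexample induction along $\ngt_{R_*}$, the same case split on the polarity of the maximal literal and on reducibility of its maximal side, and the same appeals to Lemmas~\ref{lem:eq-res-reduces}, \ref{lem:par-sup-reduces}, \ref{lem:productive-Dtheta-smaller-Ctheta}, and \ref{lem:productive-Dtheta-smaller-Ctheta-top-pos}. The only minor deviation is in the irreducible positive subcase, where you derive the contradiction directly from the existence of \emph{some} producer for $s\theta$ (which makes $s\theta$ $R_*$-reducible) rather than routing through Lemma~\ref{lem:modconstr-orderings-agree} as the paper does; both arguments are sound.
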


\begin{proof}
Suppose that $R_* \not\models N$.
Let $(C\cdot\theta)$ be the $\ngt_{R_*}$-smallest closure in $N$
such that
$R_* \not\models (C\cdot\theta)$.

Case 1: $C = C' \lor s \noteq s'$
and $s\theta \noteq s'\theta$ is maximal in $C\theta$.
By assumption, $R_* \not\models s\theta \noteq s'\theta$,
hence $s\theta{\downarrow}_{R_*} = s'\theta{\downarrow}_{R_*}$.

Case 1.1: $s\theta = s'\theta$.
Then there is an \textsc{Equality Resolution} inference
from $(C\cdot\theta)$ with conclusion $(C'\sigma\cdot\theta)$,
where $\theta \circ \sigma = \theta$.
By saturation the inference is redundant,
and by minimality of $(C\cdot\theta)$ \wrt~$\ngt_{R_*}$ this implies
$R_* \models (C'\sigma\cdot\theta)$.
But then $R_* \models (C\cdot\theta)$,
contradicting the assumption.

Case 1.2: $s\theta \neq s'\theta$.
W.l.o.g.~let $s\theta \succ s'\theta$.
Then $s\theta$ must be reducible by a rule $t\theta \to t'\theta \in R_*$,
which has been produced by a closure
$(D \cdot \theta) = (D' \lor t \eq t' \cdot \theta)$ in $N$.
By Lemma~\ref{lem:productive-Dtheta-smaller-Ctheta},
$(D \cdot \theta) \nlt_{R_*} (C \cdot \theta)$
and $D\theta \precC C\theta$.
If $s\theta$ and $t\theta$ overlap at a non-variable position of $s$,
there is a \textsc{Parallel Superposition I} inference $\iota$ between
$(D\cdot\theta)$ and $(C\cdot\theta)$;
otherwise they overlap at or below a variable position of $s$
and there is a \textsc{Parallel Superposition II} inference $\iota$
with premises
$(D\cdot\theta)$ and $(C\cdot\theta)$.
By Lemma~\ref{lem:modconstr-monotonic-2}, $R_* \not\models (D'\cdot\theta)$.
By saturation the inference is redundant,
and by minimality of $(C\cdot\theta)$ \wrt~$\ngt_{R_*}$
we know that
$R_* \models \concl(\iota)$.
Since $R_* \not\models (D'\cdot\theta)$ this implies
$R_* \models (C\cdot\theta)$,
contradicting the assumption.

Case 2: $C\theta = C'\theta \lor s\theta \eq s'\theta$
and $s\theta \eq s'\theta$ is maximal in $C\theta$.
By assumption, $R_* \not\models s\theta \eq s'\theta$,
hence $s\theta{\downarrow}_{R_*} \neq s'\theta{\downarrow}_{R_*}$.
W.l.o.g.~let $s\theta \succ s'\theta$.

Case 2.1: $s\theta$ is reducible by a rule $t\theta \to t'\theta \in R_*$,
which has been produced by a closure
$(D \cdot \theta) = (D' \lor t \eq t' \cdot \theta)$ in $N$.
By Lemmas \ref{lem:productive-Dtheta-smaller-Ctheta}
and \ref{lem:productive-Dtheta-smaller-Ctheta-top-pos},
we obtain $(D \cdot \theta) \nlt_{R_*} (C \cdot \theta)$,
and, provided that $t\theta$ occurs in $s\theta$ below the top,
also $D\theta \precC C\theta$.
Therefore
there is a \textsc{Parallel Superposition} (I or II) inference $\iota$
with left premise $(D\cdot\theta)$ and right premise $(C\cdot\theta)$,
and we can derive a contradiction analogously to Case~1.2.

Case 2.2: It remains to consider the case that
$s\theta$ is irreducible by $R_*$.
Then $s\theta$ is also irreducible by $R_{s\theta}$.
Furthermore, by Lemma~\ref{lem:modconstr-orderings-agree},
$\ngt_{R_{s\theta}}$ and $\ngt_{R_*}$ agree on all closures
in which $s\theta$ is a strictly maximal term and occurs only positively.
Therefore $(C\cdot\theta)$ satisfies all conditions
for productivity, hence $R_* \models (C\cdot\theta)$,
contradicting the assumption.
\qed
\end{proof}

\subsection{Lifting}
\label{ssect:lifting}

It remains to lift the refutational completeness result
for ground closure Horn superposition to the non-ground case.

If $C$ is a general clause, we call every ground closure
$(C \cdot \theta)$ a ground instance of $C$.
If
\[\frac{C_n \ldots C_1}{C_0}\]
is a general inference
and
\[\frac{(C_n \cdot\theta) \ldots (C_1 \cdot \theta)}{(C_0 \cdot \theta)}\]
is a ground inference,
we call the latter a ground instance of the former.
The function $\gnd$ maps every general clause $C$
and every general inference $\iota$ to the set of its 
ground instances.
We extend $\gnd$ to sets of clauses $N$ or sets of inferences $I$
by defining $\gnd(N) := \bigcup_{C \in N} \gnd(C)$
and $\gnd(I) := \bigcup_{\iota \in I} \gnd(\iota)$.

\begin{lem}
\label{lem:grounding-function}%
$\gnd$ is a grounding function, that is,
(1) $\gnd(\bot) = \{\bot\}$;
(2) if $\bot \in \gnd(C)$ then $C = \bot$; and
(3) for every inference $\iota$,
$\gnd(\iota) \subseteq \RedI(\gnd(\concl(\iota)))$.
\end{lem}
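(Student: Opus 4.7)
The plan is to verify the three clauses in turn, with (1) and (2) reducing to the convention $(\bot\cdot\theta) = \bot$ and (3) reducing to Lemma~\ref{lem:is-red-crit}(4) via a trivial observation about conclusions of ground instances.

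For (1), any ground instance of $\bot$ must be a closure of the form $(\bot\cdot\theta)$, which by the convention introduced in Section~\ref{sect:complh} is identified with $\bot$ itself; hence $\gnd(\bot) = \{\bot\}$. For (2), if $\bot \in \gnd(C)$, then $(C\cdot\theta) = \bot$ for some $\theta$, and by the same convention this forces $C$ to be the empty multiset of literals, i.e., $C = \bot$.

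For (3), let $\bar\iota \in \gnd(\iota)$, so that $\bar\iota$ is a ground closure inference of the form
\[
  \frac{(C_n\cdot\theta)\ \ldots\ (C_1\cdot\theta)}{(C_0\cdot\theta)}
\]
where $C_n,\ldots,C_1$ and $C_0 = \concl(\iota)$ are the premises and conclusion of $\iota$. In particular $\concl(\bar\iota) = (C_0\cdot\theta) \in \gnd(C_0) = \gnd(\concl(\iota))$. By Lemma~\ref{lem:is-red-crit}(4) applied in the ground closure calculus, $\bar\iota \in \RedI(\{\concl(\bar\iota)\})$, and by the monotonicity property in Lemma~\ref{lem:is-red-crit}(2) this yields $\bar\iota \in \RedI(\gnd(\concl(\iota)))$, as required.

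The proof itself is short and presents no real obstacle; the substantive work has already been absorbed elsewhere, both in establishing that $(\RedI,\RedC)$ is a redundancy criterion on ground closures and in arranging the definition of $\gnd(\iota)$ so that its members are guaranteed to be ground closure inferences whose conclusions are ground instances of $\concl(\iota)$. Were one asked to prove lifting in the stronger sense (that every ground closure inference which is not already excluded by clauses (iii) or (iv) of Definition~\ref{dfn:redundancy-inf} arises from a general inference), more care with the side conditions of \textsc{Parallel Superposition}~I and~II would be needed; but that lemma is a separate matter.
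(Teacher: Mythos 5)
Your proof is correct and takes essentially the same approach as the paper, which dispatches (1) and (2) as ``obvious'' (via the identification $(\bot\cdot\theta)=\bot$) and notes for (3) exactly the fact that $\concl(\gnd(\iota)) \subseteq \gnd(\concl(\iota))$; you simply spell out the appeal to Lemma~\ref{lem:is-red-crit}(4) (with a harmless detour through (2)) that the paper leaves implicit.
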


\begin{proof}
Properties (1) and (2) are obvious;
property (3) follows from the fact that
for every inference $\iota$,
$\concl(\gnd(\iota)) \subseteq \gnd(\concl(\iota))$.
\qed
\end{proof}

The grounding function $\gnd$ induces a lifted redundancy criterion
$(\RedI^\gnd,\RedC^\gnd)$ where
$\iota \in \RedI^\gnd(N)$ if and only if
$\gnd(\iota) \subseteq \RedI(\gnd(N))$
and
$C \in \RedC^\gnd(N)$ if and only if
$\gnd(C) \subseteq \RedC(\gnd(N))$.

\begin{lem}
\label{lem:lifting}
Every ground inference from closures in $\gnd(N)$ is
a ground instance of an inference from $N$ or
contained in $\RedI(\gnd(N))$.
\end{lem}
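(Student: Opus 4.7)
The plan is to split on which of the three ground inference rules produces $\iota$: \textsc{Equality Resolution}, \textsc{Parallel Superposition I}, or \textsc{Parallel Superposition II}. The first two types I would lift directly to general inferences from $N$; the third, which superposes at a variable position of $C$ and therefore has no general counterpart, I would instead show is always contained in $\RedI(\gnd(N))$.

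For a ground \textsc{Equality Resolution} inference on $(C' \lor s \noteq s' \cdot \theta)$ with $C' \lor s \noteq s' \in N$ and $s\theta = s'\theta$, the idempotent mgu $\sigma = \mgu(s \doteq s')$ exists and satisfies $\sigma\theta = \theta$, so the general rule applied to $C' \lor s \noteq s'$ yields $C'\sigma$, whose $\theta$-instance is exactly $\concl(\iota)$; the ground maximality condition witnesses its general counterpart. The treatment of \textsc{Parallel Superposition I} is analogous: since $u$ is not a variable and $t\theta = u\theta$, the mgu $\sigma = \mgu(t \doteq u)$ exists, the general rule at $\sigma$ has a $\theta$-instance that matches $\iota$ (using $\sigma\theta = \theta$ on both the conclusion and on each side-condition), and each ground ordering constraint witnesses the corresponding general one.

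The genuine difficulty is \textsc{Parallel Superposition II}, with premises $(D' \lor t \eq t' \cdot \theta)$ and $(C \cdot \theta)$ where $x \in \vars(C)$ and $x\theta = u[t\theta]$. Fixing an arbitrary left-reduced ground rewrite system $R$ contained in $\succ$, I would distinguish two cases. If $(t\theta \to t'\theta) \notin R$, clause (iii) of Definition~\ref{dfn:redundancy-inf} applies immediately, because $\iota$ is a Superposition inference with $t\theta \succ t'\theta$ on its left premise. Otherwise $(t\theta \to t'\theta) \in R$, and I would set $\theta' = \theta[x \mapsto u[t'\theta]]$, so that $(C \cdot \theta') \in \gnd(N)$. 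The crux is then proving $(C \cdot \theta) \ngt_R (C \cdot \theta')$: since $x\theta$ and $x\theta'$ have a common $R$-normal form, the contributions to $\nm_R$ coming from non-variable subterms of $C$ and from variables $y \neq x$ agree on $\theta$ and $\theta'$; and since $x \in \vars(C)$ forces $(x,m) \in \lss(C)$ for some label $m$, the $(x,m)$-contribution $\rrm_R(u[t\theta],m)$ expands by Lemma~\ref{lem:rrm-welldefined} as $\{(t\theta,m^*)\} \cup \rrm_R(u[t'\theta],m)$ for an appropriate label $m^*$, so $\nm_R(C \cdot \theta) \supsetneq \nm_R(C \cdot \theta')$.

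It remains to close the \textsc{Parallel Superposition II} argument. If $R \models (C \cdot \theta')$, then $R$ also models $\concl(\iota) = D'\theta \lor C\theta'$ and clause (i) of Definition~\ref{dfn:redundancy-inf} applies. Otherwise $(C \cdot \theta')$ is a closure in $\gnd(N)$ that is $\ngt_R$-smaller than the right premise $(C \cdot \theta)$ and falsified by $R$, so clause (ii) applies. The only nontrivial step of this plan is the normalization-multiset bookkeeping for \textsc{Parallel Superposition II}; everything else reduces to standard lifting.
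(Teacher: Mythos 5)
Your proposal is correct and follows the same route as the paper: lift \textsc{Equality Resolution} and \textsc{Parallel Superposition I} directly as ground instances of general inferences, and for \textsc{Parallel Superposition II} fix an arbitrary left-reduced $R$, dispatch case (iii) when $(t\theta \to t'\theta)\notin R$, and otherwise observe that $(C\cdot\theta[x\mapsto u[t'\theta]])\in\gnd(N)$ is $\ngt_R$-smaller than $(C\cdot\theta)$ to invoke case (i) or (ii). The only difference is cosmetic: where the paper simply asserts the $\ngt_R$-decrease (implicitly reusing the argument from Lemma~\ref{lem:par-sup-reduces}), you spell out the normalization-multiset bookkeeping via $(x,m)\in\lss(C)$ and the well-definedness of $\rrm_R$, which is a correct and slightly more explicit rendering of the same fact.
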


\begin{proof}
Let $\iota$ be a ground inference from closures in $\gnd(N)$.
If $\iota$ is a \textsc{Parallel Superposition I} inference
\[\frac
  {(D \cdot \theta) \qquad (C \cdot \theta)}
  {(C_0\sigma \cdot \theta)}
\]
then it is a ground instance of the \textsc{Parallel Superposition} inference
\[\frac
  {D \qquad C}
  {C_0\sigma}
\]
whose premises are in $N$.
Similarly, if $\iota$ is an \textsc{Equality Resolution} inference,
then it is a ground instance of an \textsc{Equality Resolution}
with premises in $N$.

Otherwise
$\iota$ is a \textsc{Parallel Superposition II} inference
\[\frac
  {(D' \lor {t \eq t'} \cdot \theta) \qquad (C \cdot \theta)}
  {(D' \lor C \cdot \theta[x \mapsto u[t'\theta]])}
\]
with $x\theta = u[t\theta]$.
Let $R$ be a left-reduced ground rewrite system contained in $\succ$.
If $(t\theta \to t'\theta) \notin R$, then
$\iota$ satisfies case (iii) of Def.~\ref{dfn:redundancy-inf}.
Otherwise $(C \cdot \theta[x \mapsto u[t'\theta]])$
is a ground instance of the clause $C \in N$ and $\ngt_R$-smaller than the
premise $(C \cdot \theta)$.
If $R \models (C \cdot \theta[x \mapsto u[t'\theta]])$,
then $R \models \concl(\iota)$,
so $\iota$ satisfies case (i) of Def.~\ref{dfn:redundancy-inf};
otherwise it satisfies case (ii) of Def.~\ref{dfn:redundancy-inf}.
\qed
\end{proof}

\begin{thm}
The Horn Superposition Calculus (using \textsc{Parallel Superposition})
together with the lifted redundancy criterion
$(\RedI^\gnd,\RedC^\gnd)$ is refutationally complete.
\end{thm}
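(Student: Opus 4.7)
The plan is to obtain refutational completeness of the non-ground calculus by invoking the abstract lifting theorem of Waldmann et al.~\cite{WaldmannTourretRobillardBlanchette2022}: once a grounding function has been set up that is compatible with a ground redundancy criterion and a statically refutationally complete ground calculus, the induced lifted criterion automatically inherits static refutational completeness. All four prerequisites have been established in the preceding subsections, namely that $(\RedI,\RedC)$ is a redundancy criterion (Lemma~\ref{lem:is-red-crit}), that $\gnd$ is a grounding function (Lemma~\ref{lem:grounding-function}), that ground inferences can be lifted to inferences from the parent clause set modulo redundancy (Lemma~\ref{lem:lifting}), and that the ground closure calculus is statically refutationally complete (Theorem~\ref{thm:rstar-is-model}).

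If I wished to spell out the reasoning instead of merely citing the framework, I would proceed as follows. Assume that $N$ is saturated with respect to $(\RedI^\gnd,\RedC^\gnd)$ and does not contain $\bot$; the goal is to exhibit a model of $N$. First, I would note that $\bot \notin \gnd(N)$ by Lemma~\ref{lem:grounding-function}(2). Next, I would verify that $\gnd(N)$ is saturated with respect to the ground redundancy criterion: for any ground inference $\iota'$ with premises in $\gnd(N)$, Lemma~\ref{lem:lifting} either places $\iota'$ directly in $\RedI(\gnd(N))$, or exhibits $\iota'$ as a ground instance of some non-ground inference $\iota$ from $N$; in the latter case saturation of $N$ gives $\iota \in \RedI^\gnd(N)$, which unfolds to $\gnd(\iota) \subseteq \RedI(\gnd(N))$, and in particular $\iota' \in \RedI(\gnd(N))$.

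Having shown that $\gnd(N)$ is a saturated set of ground closures not containing $\bot$, I would then invoke Theorem~\ref{thm:rstar-is-model} to produce a left-reduced ground rewrite system $R_*$ such that $R_* \models \gnd(N)$. Since a general clause is valid in a Herbrand interpretation exactly when all of its ground instances are, this immediately yields $R_* \models N$, so $N$ is satisfiable, and refutational completeness follows.

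I do not expect a real obstacle at this stage: all the technically hard work, namely designing the normalization closure ordering, verifying that ground \textsc{Parallel Superposition} and \textsc{Equality Resolution} strictly decrease closures in $\ngt_R$, and executing the candidate-model/productivity argument, has already been carried out. The only care required is to unfold the definition of $(\RedI^\gnd,\RedC^\gnd)$ correctly, so that saturation of $N$ translates into saturation of $\gnd(N)$; this is a purely formal manipulation guided by Lemma~\ref{lem:lifting}.
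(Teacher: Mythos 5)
Your proposal is correct and takes essentially the same route as the paper: the paper's proof likewise consists of citing Lemma~\ref{lem:lifting} together with the abstract lifting theorem (Thm.~32) of Waldmann et al.~\cite{WaldmannTourretRobillardBlanchette2022}, with Lemma~\ref{lem:is-red-crit}, Lemma~\ref{lem:grounding-function}, and Thm.~\ref{thm:rstar-is-model} providing the side conditions exactly as you list them. Your optional unfolding of the argument (saturation of $N$ implies saturation of $\gnd(N)$, then applying Thm.~\ref{thm:rstar-is-model}) is the standard content of that framework theorem and is carried out correctly.
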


\begin{proof}
This follows immediately from Lemma~\ref{lem:lifting}
and Thm.~32 from
(Waldmann et al.~\cite{WaldmannTourretRobillardBlanchette2022}).
\qed
\end{proof}

\subsection{Deletion and Simplification}

It turns out that \DER{}, as well as
most concrete deletion and simplification techniques
that are implemented in state-of-the-art superposition provers
are in fact covered by our abstract redundancy criterion.
There are some unfortunate exceptions, however.

\subsubsection{DER.}
\textsc{Destructive Equality Resolution}, that is, the replacement
of a clause $x \noteq t \lor C$ with $x \notin \vars(t)$ by
$C\{x \mapsto t\}$ is covered by the redundancy criterion.
To see this, consider an arbitrary ground instance
$(x \noteq t \lor C \cdot \theta)$ of
$x \noteq t \lor C$.
Let $R$ be a left-reduced ground rewrite system contained in $\succ$.
Assume that the instance is false in $R$.
Then $x\theta$ and $t\theta$ have the same $R$-normal form.
Consequently, any redex or normal form in
$\nm_R(C\{x \mapsto t\} \cdot \theta)$
was already present in
$\nm_R(x \noteq t \lor C \cdot \theta)$
(possibly with a larger label, if $x$ occurs only positively in $C$).
Moreover, the labeled normal form $(x\theta{\downarrow}_R,2)$
that is present in $\nm_R(x \noteq t \lor C \cdot \theta)$
is missing in $\nm_R(C\{x \mapsto t\} \cdot \theta)$.
Therefore $(x \noteq t \lor C \cdot \theta)
\ngt_R (C\{x \mapsto t\} \cdot \theta)$.
Besides, both closures have clearly the same truth value in $R$,
that is, false.

\subsubsection{Subsumption.}
Propositional subsumption, that is,
the deletion of a clause ${C \lor D}$ with nonempty $D$
in the presence of a clause $C$
is covered by the redundancy criterion.
This follows from the fact that every
ground instance
$((C \lor D) \cdot \theta)$ of the deleted clause
is entailed by a smaller ground instance
$(C \cdot \theta)$ of the subsuming clause.
This extends to all simplifications that replace a clause
by a subsuming clause in the presence of certain other clauses,
for instance the replacement of a clause
$t\sigma \eq t'\sigma \lor C$ by $C$ in the presence
of a clause $t \noteq t'$,
or the replacement of a clause
$u[t\sigma] \noteq u[t'\sigma] \lor C$ by $C$ in the presence
\looM
of a clause $t \eq t'$.

First-order subsumption, that is,
the deletion of a clause $C\sigma \lor D$
in the presence of a clause $C$
is not covered, however.
This is due to the fact that
$\ngt_R$ makes the instance $C\sigma$ smaller than $C$,
rather than larger (see Lemma~\ref{lem:ordering-of-instances}).

\subsubsection{Tautology Deletion.}
The deletion of (semantic or syntactic) tautologies is obviously
covered by the redundancy criterion.

\subsubsection{Parallel Rewriting with Condition Literals.}
Parallel rewriting with condition literals, that is, the replacement
of a clause $t \noteq t' \lor C[t,\ldots,t]_{p_1,\dots,p_k}$,
where $t \succ t'$ and $p_1,\dots,p_k$ are all the occurrences of $t$ in $C$
by $t \noteq t' \lor C[t',\ldots,t]_{p_1,\dots,p_k}$
is covered by the redundancy criterion.
This can be shown analogously as for~\DER{}.

\subsubsection{Demodulation.}
Parallel demodulation is the replacement of a clause \linebreak[4]
$C[t\sigma,\ldots,t\sigma]_{p_1,\dots,p_k}$
by $C[t'\sigma,\ldots,t'\sigma]_{p_1,\dots,p_k}$
in the presence of another clause $t \eq t'$ where $t\sigma \succ t'\sigma$.
In general, this is \emph{not} covered by our redundancy criterion.
For instance, if $\succ$ is a KBO where all symbols have weight $1$
and if $R = \{f(b) \to b,\,g(g(b)) \to b\}$,
then replacing $f(f(f(b)))$ by $g(g(b))$ in some clause
$f(f(f(b))) \noteq c$ yields a clause with a larger $R$-normalization
multiset,
since the labeled redexes $\{(f(b),2),\,(f(b),2),\,(f(b),2)\}$ are replaced
by $\{(g(g(b)),2)\}$
and $g(g(b)) \succ f(b)$.

A special case is supported, though: If $t'\sigma$ is a proper subterm of
$t\sigma$, then the $R$-normalization multiset either remains the same
of becomes smaller, since every redex in the normalization of
$t'\sigma$ occurs also in the normalization of
$t\sigma$.

\section{Completeness, Part II: The Non-Horn Case}
\label{sect:complnh}%

In the non-Horn case, the construction that we have seen in the
previous section fails for \textsc{(Parallel) Superposition} inferences
at the top of positive literals.
Take an LPO with precedence
$f \succ c_6 \succ c_5 \succ c_4 \succ c_3 \succ c_2 \succ c_1 \succ b$
and consider the ground closures
$(f(x_1) \eq c_1 \lor f(x_2) \eq c_2 \lor f(x_3) \eq c_3 \cdot \theta)$
and
$({f(x_4) \eq c_4} \lor {f(x_5) \eq c_5} \lor {f(x_6) \eq c_6} \cdot \theta)$,
where $\theta$ maps all variables to the same constant $b$.
Assume that the first closure produces the rewrite rule
$(f(b) \eq c_3) \in R_*$.
The $R_*$-normalization multisets of both closures
are dominated by three occurrences of the labeled redex $(f(b),0)$.
However, a \textsc{Superposition} inference between the closures
yields
$(f(x_1) \eq c_1 \lor f(x_2) \eq c_2 \lor
f(x_4) \eq c_4 \lor f(x_5) \eq c_5 \lor c_3 \eq c_6 \cdot \theta)$,
whose $R_*$-normalization multiset
contains four occurrences of the labeled redex $(f(b),0)$,
hence the conclusion of the inference is larger than
both premises.
If we want to change this,
we must ensure that the weight of positive literals
depends primarily on their larger sides,
and if the larger sides are equal,
on their smaller sides.
That means that in the non-Horn case,
the clause ordering must treat positive literals as the
traditional clause ordering $\succC$.
But that has two important consequences:
First, \DER{} may no longer be used to eliminate variables that
occur also in positive literals
(since \DER{} might now increase the weight of these literals).
On the other hand, unrestricted demodulation becomes possible
for positive literals.

\begin{dfn}
\label{dfn:ss-ts-nh}%
We define the subterm set $\ulss^{-}(C)$
and the topterm set $\ults^{-}(C)$ of a clause $C$ as in the Horn case:
\[\begin{array}{@{}r@{}l@{}}
  \ulss^{-}(C) = {} & \{\, t \mid C = C' \lor s[t]_p \noteq s' \,\} \\[1ex]
  \ults^{-}(C) = {} & \{\, t \mid C = C' \lor t \noteq t' \,\}
\end{array}\]
\end{dfn}

We do not need labels anymore.
Instead, for every redex or normal form $u$ that appears in negative
literals we include the two-element multiset $\{u.u\}$
in the $R$-normalization multiset
to ensure that a redex or normal form $u$
in negative literals has a larger weight than a positive
literal $u \eq v$ with $u \succ v$:

\begin{dfn}
\label{dfn:rm-nm-ngt-nh}%
We define the $R$-redex multiset $\rrm_R(t)$
of a ground term $t$ by
\[\begin{array}{@{}r@{}l@{}}
  \rrm_R(t) = {} & \emptyset \text{~if $t$ is $R$-irreducible;} \\[1ex]
  \rrm_R(t) = {} & \{\{u,u\}\} \cup \rrm_R(t') \text{~if $t \to_R t'$ using the rule $u \to v \in R$.}
\end{array}\]
We define the $R$-normalization multiset $\nm_R(C\cdot\theta)$
of a ground closure $(C\cdot\theta)$ by
\[\begin{array}{@{}r@{}l@{}}
  \nm_R(C\cdot\theta) = {} & \,\bigcup_{f(t_1,\dots,t_n) \in \ulss^{-}(C)}
         \rrm_R(f(t_1\theta{\downarrow}_R,\dots,t_n\theta{\downarrow}_R)) \\[3pt]
 & \hspace*{2em} {} \cup \bigcup_{x \in \ulss^{-}(C)} \rrm_R(x\theta) \\[3pt]
 & \hspace*{2em} {} \cup \bigcup_{t \in \ults^{-}(C)} \{ \{t\theta{\downarrow}_R,t\theta{\downarrow}_R\} \} \\[3pt]
 & \hspace*{2em} {} \cup \bigcup_{(s \eq s') \in C} \{ \{s\theta,s'\theta\} \}
\end{array}\]
Once more, the $R$-normalization closure ordering $\ngt_R$ compares
ground closures $(C\cdot\theta_1)$ and $(D\cdot\theta_2)$
using a lexicographic combination of three orderings:
\begin{itemize}
\item
  first, the twofold multiset extension $({\succ}_\mul)_\mul$
  of the reduction ordering $\succ$
  applied to the multisets $\nm_R(C\cdot\theta_1)$ and $\nm_R(D\cdot\theta_2)$,
\item
  second, the traditional clause ordering $\succC$ applied to
  $C\theta_1$ and $D\theta_2$,
\item
  and third, the same closure ordering $\succCL$ as in the
  Horn case.
\end{itemize}
\end{dfn}

Lemma~\ref{lem:ordering-of-instances-nh} can be proved like
Lemma~\ref{lem:ordering-of-instances}:

\begin{lem}
\label{lem:ordering-of-instances-nh}%
If $(C\cdot\theta)$ and $(C\sigma\cdot\theta')$
are ground closures,
such that
$C\theta = C\sigma\theta'$,
and $C$ and $C\sigma$ are not equal up to bijective renaming,
then $(C\cdot\theta) \ngt_R (C\sigma\cdot\theta')$.
\end{lem}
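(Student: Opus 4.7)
\textbf{Proof plan for Lemma~\ref{lem:ordering-of-instances-nh}.}
The plan is to mirror the proof of Lemma~\ref{lem:ordering-of-instances} verbatim, adapting only the parts that depend on the changed definition of $\nm_R$ in Def.~\ref{dfn:rm-nm-ngt-nh}. The main goal is to show $\nm_R(C\cdot\theta) \supseteq \nm_R(C\sigma\cdot\theta')$ as multisets; once this containment is established, a lexicographic case analysis on whether the inclusion is strict finishes the proof.

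First I would analyze the three sources that contribute to $\nm_R$ separately. For the negative-literal subterm contributions $\bigcup_{t \in \ulss^{-}(C)} \rrm_R(\dots)$, I would argue exactly as in the Horn case: $\ulss^{-}$ is a set (duplicates are removed), and $\sigma$ can only make distinct syntactic subterms of $C$ coincide, never split an equal pair apart; hence $\ulss^{-}(C\sigma)$, viewed through $\theta'$, contributes a submultiset of what $\ulss^{-}(C)$ contributes through $\theta$, because $C\theta = C\sigma\theta'$ guarantees that paired positions produce identical redexes and normal forms. The same reasoning applies to $\ults^{-}$. For the new positive-literal contribution $\bigcup_{(s \eq s') \in C} \{\{s\theta,s'\theta\}\}$, the union ranges over literals with multiplicity, and applying $\sigma$ preserves literal multiplicities in the clause multiset; since $C\theta = C\sigma\theta'$, each positive literal of $C$ contributes the same two-element multiset as the corresponding literal of $C\sigma$. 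Together this yields $\nm_R(C\cdot\theta) \supseteq \nm_R(C\sigma\cdot\theta')$.

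If the inclusion is strict, then $\nm_R(C\cdot\theta) \mathrel{({\succ}_{\mul})_{\mul}} \nm_R(C\sigma\cdot\theta')$ and hence $(C\cdot\theta) \ngt_R (C\sigma\cdot\theta')$ by the first component of the lexicographic combination. Otherwise the two multisets are equal, and because the groundings also agree ($C\theta = C\sigma\theta'$), the tie is broken on the third component: since $C\sigma$ is an instance of $C$ but $C$ is not an instance of $C\sigma$ (using the hypothesis that they are not equal up to bijective renaming), the defining property of $\succCL$ gives $(C\cdot\theta) \succCL (C\sigma\cdot\theta')$, and hence $(C\cdot\theta) \ngt_R (C\sigma\cdot\theta')$.

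The only non-routine point is checking that the change from labeled contributions to the new mixed multiset-of-multisets structure does not disturb the counting argument for negative literals. I expect this to be unproblematic because the positive literal contribution is now indexed by occurrences in $C$ (and therefore in bijection with occurrences in $C\sigma$), while it is precisely the set-based indexing of $\ulss^{-}$ and $\ults^{-}$ that enables the $\supseteq$ direction, just as before. No new calculations are required beyond re-checking that each of the four union branches in Def.~\ref{dfn:rm-nm-ngt-nh} behaves monotonically under the identification $t \mapsto t\sigma$.
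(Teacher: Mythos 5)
Your proposal is correct and follows the same line of reasoning the paper intends: the paper itself only remarks that Lemma~\ref{lem:ordering-of-instances-nh} ``can be proved like Lemma~\ref{lem:ordering-of-instances}'', and your proof does exactly that, correctly identifying that the set-based indexing of $\ulss^{-}$ and $\ults^{-}$ still yields the multiset containment via the injection $t\sigma \mapsto t$, while the new positive-literal contribution is literal-for-literal identical on both sides because $C\theta = C\sigma\theta'$. The concluding lexicographic case split (strict containment vs.\ tie broken by $\succCL$) also matches the original argument.
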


The ground closure superposition calculus for the non-Horn case
uses the inference rules \textsc{Parallel Superposition I},
\textsc{Parallel Superposition II}, \textsc{Equality Resolution},
defined as in the Horn case,
plus the \textsc{Equality Factoring} rule:\footnote{%
  Alternatively, one can replace \textsc{Equality Factoring} by
  \textsc{Ordered Factoring} and two variants of
  \textsc{Merging Paramodulation},
  one for overlaps at a non-variable position and
  one for overlaps at or below a variable position
  (analogously to \textsc{Parallel Superposition}).}

\bigskip
\noindent\begin{tabular}{@{}p{14em}@{~}l@{}}
  \textsc{Equality Factoring:}
&
  $\displaystyle{\frac
  {(C' \lor {r} \eq r' \lor {s} \eq s' \cdot \theta)}
  {((C' \lor s' \noteq r' \lor r \eq r')\sigma \cdot \theta)}
  }$
\end{tabular}\par\nobreak\noindent\begin{itemize}\item[]
  where $s\theta = r\theta$;
  $\sigma = \mgu(s {\doteq} r)$;
  $s\theta \succ s'\theta$;
  and
  $({s} \eq s')\theta$ is maximal in
  $(C' \lor {r} \eq r' \lor {s} \eq s')\theta$.
\end{itemize}

We can prove the equivalent versions of
Lemmas~\ref{lem:eq-res-reduces} and \ref{lem:par-sup-reduces}
and their analogue for \textsc{Equality Factoring}:

\begin{lem}
\label{lem:eq-res-reduces-nh}%
Let $\iota$ be a ground \textsc{Equality Resolution} inference.
Then the conclusion $\concl(\iota)$ of $\iota$
is $\ngt_R$-smaller than its premise.
\end{lem}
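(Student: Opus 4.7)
The plan is to mirror the Horn-case argument of Lemma~\ref{lem:eq-res-reduces}, with the only new wrinkle being the contribution of positive literals to the non-Horn normalization multiset of Definition~\ref{dfn:rm-nm-ngt-nh}. Let $\iota$ be the inference
\[
\frac{(C' \lor s \noteq s' \cdot \theta)}{(C'\sigma \cdot \theta)}
\]
with $s\theta = s'\theta$ and $\sigma = \mgu(s \doteq s')$. The goal is to show $(C' \lor s \noteq s' \cdot \theta) \ngt_R (C'\sigma \cdot \theta)$.

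First I would observe the purely syntactic fact that removing one negative literal only shrinks the relevant subterm and topterm sets: $\ulss^{-}(C') \subseteq \ulss^{-}(C' \lor s \noteq s')$ and $\ults^{-}(C') \subseteq \ults^{-}(C' \lor s \noteq s')$, while the multiset of positive literals of $C'$ coincides with that of $C' \lor s \noteq s'$. Plugging into Definition~\ref{dfn:rm-nm-ngt-nh} and using that $\rrm_R$ depends only on the $R$-normal forms of argument subterms, I conclude that every element contributing to $\nm_R(C' \cdot \theta)$ also contributes to $\nm_R(C' \lor s \noteq s' \cdot \theta)$, so that $\nm_R(C' \cdot \theta) \subseteq \nm_R(C' \lor s \noteq s' \cdot \theta)$ as multisets of multisets.

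Next I would split on whether this containment is strict. In the strict case the first component of the lexicographic closure ordering immediately yields $(C' \lor s \noteq s' \cdot \theta) \ngt_R (C' \cdot \theta)$. Otherwise the two normalization multisets coincide, and I fall back to the second component: since $(C' \lor s \noteq s')\theta$ has an extra literal compared to $C'\theta$, the traditional clause ordering gives $(C' \lor s \noteq s')\theta \succC C'\theta$, which again yields $(C' \lor s \noteq s' \cdot \theta) \ngt_R (C' \cdot \theta)$.

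Finally I would pass from $(C' \cdot \theta)$ to $(C'\sigma \cdot \theta)$. The idempotence of $\sigma$ together with $s\theta = s'\theta$ gives $\theta \circ \sigma = \theta$, hence $C'\sigma\theta = C'\theta$. If $C'$ and $C'\sigma$ are equal up to bijective renaming, the two closures are identified and there is nothing more to do; otherwise Lemma~\ref{lem:ordering-of-instances-nh} gives $(C' \cdot \theta) \ngt_R (C'\sigma \cdot \theta)$. Combining with the previous step via transitivity of $\ngt_R$ yields the desired $(C' \lor s \noteq s' \cdot \theta) \ngt_R (C'\sigma \cdot \theta)$. There is no real obstacle here; the only point demanding care is the bookkeeping that the positive-literal summands in the new definition of $\nm_R$ are genuinely preserved when the negative literal is deleted, so they do not interfere with the inclusion of normalization multisets.
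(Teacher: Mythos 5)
Your proof is correct and follows essentially the same route as the paper: establish that $\nm_R(C'\cdot\theta)$ is a submultiset of $\nm_R(C'\lor s\noteq s'\cdot\theta)$, note $C'\theta\precC(C'\lor s\noteq s')\theta$, conclude $(C'\cdot\theta)\nlt_R(C'\lor s\noteq s'\cdot\theta)$ by the lexicographic definition, and then pass to $(C'\sigma\cdot\theta)$ via the ordering-of-instances lemma. (Incidentally, you cite Lemma~\ref{lem:ordering-of-instances-nh} where the paper cites Lemma~\ref{lem:ordering-of-instances}; in the non-Horn setting yours is the cleaner reference.)
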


\begin{proof}
The multiset
$\nm_R(C' \cdot \theta)$ is a submultiset of
$\nm_R(C' \lor s \noteq s' \cdot \theta)$
and $C'\theta \precC (C' \lor s \noteq s')\theta$.
Therefore $(C' \cdot \theta) \nlt_R (C' \lor {s \noteq s'} \cdot \theta)$.
If $C'$ and $C'\sigma$ are equal up to bijective
renaming, we are done, otherwise the result follows from
Lemma~\ref{lem:ordering-of-instances}.
\qed
\end{proof}

\begin{lem}
\label{lem:par-sup-reduces-nh}%
Let $\iota$ be a ground \textsc{Parallel Superposition} inference
\[\frac
  {(D' \lor {t \eq t'} \cdot \theta) \qquad (C[u,\dots,u]_{p_1,\dots,p_k} \cdot \theta)}
  {((D' \lor C[t',\dots,t']_{p_1,\dots,p_k})\sigma \cdot \theta)}
\]
with $t\theta = u\theta$ and $\sigma = \mgu(t {\doteq} u)$ or
\[\frac
  {(D' \lor {t \eq t'} \cdot \theta) \qquad (C \cdot \theta)}
  {(D' \lor C \cdot \theta[x \mapsto u[t'\theta]])}
\]
with $x\theta = u[t\theta]$.
If $(D' \lor {t \eq t'} \cdot \theta) \nlt_R (C\cdot\theta)$
and $(t\theta \to t'\theta) \in R$,
then $\concl(\iota)$ is $\ngt_R$-smaller than the premise $(C\cdot\theta)$.
\end{lem}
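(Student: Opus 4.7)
The plan is to follow the structure of the Horn-case proof of Lemma~\ref{lem:par-sup-reduces}, adapted to the new weighing in Definition~\ref{dfn:rm-nm-ngt-nh}, in which positive literals contribute the 2-element multiset $\{s\theta, s'\theta\}$ and negative-literal redexes contribute $\{u, u\}$. In each case I would show that $\nm_R(\concl(\iota))$ is strictly $(({\succ}_{\mul})_{\mul})$-smaller than $\nm_R(C \cdot \theta)$, which already yields $(C \cdot \theta) \ngt_R \concl(\iota)$; for the first form, Lemma~\ref{lem:ordering-of-instances-nh} handles the $\sigma$-instantiation by comparing the intermediate closure $((D' \lor C[t',\dots,t']_{p_1,\dots,p_k}) \cdot \theta)$ with $\concl(\iota)$.

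The effect on $\nm_R$ of replacing $t\theta = u\theta$ by $t'\theta$ at each position $p_i$ splits into three sub-cases. If $p_i$ lies inside a negative literal of $C$, then one step of $R$-normalization (the reduction of the redex $t\theta$ itself) is eliminated, so exactly one copy of $\{t\theta, t\theta\}$ disappears from the corresponding $\rrm_R$-multiset while the final normal form is unaffected. If $p_i$ is the top of a positive literal $u \eq s'$, then the contribution $\{t\theta, s'\theta\}$ is replaced by the strictly smaller $\{t'\theta, s'\theta\}$. If $p_i$ is below the top of a positive literal $s[u] \eq s'$, then $\{(s[u])\theta, s'\theta\}$ is replaced by the strictly smaller $\{(s[t'])\theta, s'\theta\}$ by monotonicity of the reduction ordering $\succ$. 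The second form of \textsc{Parallel Superposition} is treated analogously, since the substitution change $\theta \mapsto \theta[x \mapsto u[t'\theta]]$ behaves on each occurrence of $x$ in $C$ exactly as if $t\theta$ had been replaced by $t'\theta$ at or below that occurrence (the normal form $x\theta{\downarrow}_R$ is unchanged because $t\theta \to_R t'\theta$).

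The role of the hypothesis $(D' \lor t \eq t') \cdot \theta \nlt_R (C \cdot \theta)$ is to bound the fresh contributions of $D'\theta$ added to the conclusion. Since $(t \eq t')\theta$ is strictly $\succL$-maximal in $(D' \lor t \eq t')\theta$ and contributes the 2-element multiset $\{t\theta, t'\theta\}$ to $\nm_R$, the hypothesis implies that every 2-element multiset stemming from $D'\theta$ is dominated by some contribution that is already present in $\nm_R(C \cdot \theta)$. Matching each removed contribution in $C$ against either its strictly smaller replacement in the conclusion or one of these dominated contributions from $D'\theta$ yields a strict decrease in the twofold multiset extension.

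I expect the main obstacle to be the borderline case where the hypothesis $\nlt_R$ holds only through equality in the $\nm_R$-component and tie-breaking via the secondary $\succC$ or tertiary $\succCL$ components. In that regime the bookkeeping must handle the possible sharing of negative-literal subterm sets between $D'$ and $C$, which makes $\nm_R((D' \lor \ldots) \cdot \theta)$ not a disjoint union of the separate $\nm_R$'s, and must carefully use strict $\succL$-maximality of $(t \eq t')\theta$ to guarantee enough budget in the removed contributions of $C$ to pay for each added literal of $D'$.
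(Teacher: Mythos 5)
Your overall approach is the same as the paper's: compare the $R$-normalization multisets of the right premise and the conclusion, case-split on the kind of position being rewritten, and use strict $\succL$-maximality of $(t \eq t')\theta$ in $D\theta$ plus the hypothesis $\nlt_R$ to bound the contributions added from $D'\theta$. The per-position case analysis (negative literal, top of positive literal, below top of positive literal) matches the paper's split into Case~(1) and the two subcases of Case~(2).

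However, there is a gap precisely where you anticipate trouble, and it is resolved differently from what you expect. Your statement that the hypothesis ``implies that every 2-element multiset stemming from $D'\theta$ is dominated by some contribution already present in $\nm_R(C\cdot\theta)$'' is too weak as stated: for the multiset extension to strictly decrease, the $D'\theta$-contributions must be dominated by the \emph{removed} contribution, not by an arbitrary surviving element of $\nm_R(C\cdot\theta)$. Strict $\succL$-maximality of $(t\eq t')\theta$ only gives that all multisets from $D'\theta$ are $\prec \{t\theta,t'\theta\}$; it does not immediately bound them by $\{s\theta,s'\theta\}$ when the overlap is at the top of the strictly maximal positive literal, since a positive literal of $D'\theta$ may well have the form $t\theta \eq v'\theta$ with $v'\theta \prec t'\theta$, so $\{t\theta,v'\theta\}$ may fail to be $\prec \{s\theta,s'\theta\} = \{t\theta,s'\theta\}$ if $t'\theta \succ s'\theta$. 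The paper closes this exactly by using the hypothesis $(D\cdot\theta) \nlt_R (C\cdot\theta)$ to derive the concrete bound $t'\theta \preceq s'\theta$ (if $t'\theta \succ s'\theta$, then $\{t\theta,t'\theta\}$ would dominate every element of $\nm_R(C\cdot\theta)$ and hence $(D\cdot\theta) \ngt_R (C\cdot\theta)$), after which $\{t\theta,t'\theta\} \preceq \{s\theta,s'\theta\}$ and everything from $D'\theta$ is safely below the removed multiset. This derivation also absorbs your ``borderline'' worry: whether $\nlt_R$ holds strictly in the $\nm_R$-component or only by tie-breaking, one still obtains $t'\theta \preceq s'\theta$. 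Finally, your concern about shared negative-literal subterms between $D'$ and $C$ is unfounded: since $\ulss^{-}$ and $\ults^{-}$ are \emph{sets}, such sharing can only make $\nm_R$ of the conclusion smaller than the disjoint union, which helps rather than hurts the argument.
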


\begin{proof}
The redex $t\theta$ is replaced by $t'\theta$
at all occurrences of $u$ or at or below all occurrences of $x$ in $C$.

(1) If at least one of the occurrences where $u$ or $x$ occurs in $C$
is in a negative literal, then one redex $\{t\theta,t\theta\}$ is removed
from $\nm_R(C\cdot\theta)$,
furthermore some of the multisets $\{s\theta,s'\theta\}$
in $\nm_R(C\cdot\theta)$ that stem from positive literals
may be replaced by smaller ones as well.
The multiset $\{t\theta,t'\theta\}$ is a strictly maximal element of
$\nm_R(D' \lor {t \eq t'} \cdot \theta)$;
therefore all multisets in $\nm_R(D\cdot\theta)$
are smaller than $\{t\theta,t'\theta\}$ and thus smaller than
$\{t\theta,t\theta\}$.
Therefore $\nm_R(C\cdot\theta)$ is larger than
$\nm_R(D' \lor C[t',\dots,t']_{p_1,\dots,p_k} \cdot \theta)$
or
$\nm_R(D' \lor C \cdot \theta[x \mapsto u[t'\theta]])$.
In the second case, this implies
$(C\cdot\theta) \ngt_R \concl(\iota)$ immediately.
In the first case, it implies
$(C\cdot\theta) \ngt_R
(D' \lor C[t',\dots,t']_{p_1,\dots,p_k} \cdot \theta)$
and $(C\cdot\theta) \ngt_R \concl(\iota)$ follows
using Lemma~\ref{lem:ordering-of-instances}.

(2) If $u$ or $x$ occurs only in positive literals of $C$,
then there is a strictly maximal literal
$s\theta \eq s'\theta$ of $C\theta$
such that $t\theta$ is a subterm of $s\theta$.
Then the multiset $\{s\theta,s'\theta\}$
in $\nm_R(C\cdot\theta)$ that stems from the literal
$s \eq s'$
is replaced by a smaller multiset.
If $t\theta$ occurs in $s\theta$ below the top, then
$t\theta \prec s\theta$, therefore
$\{t\theta,t'\theta\}$ is smaller than
$\{s\theta,s'\theta\}$, therefore all multisets in $\nm_R(D\cdot\theta)$
are smaller than $\{s\theta,s'\theta\}$,
and we can continue as in part (1).
Otherwise $t\theta = s\theta$, then we must have $t'\theta \preceq s'\theta$,
since otherwise $(D' \lor {t \eq t'} \cdot \theta) \ngt_R (C\cdot\theta)$.
So we can again conclude that
all multisets in $\nm_R(D\cdot\theta)$ are smaller than
$\{t\theta,t'\theta\}$, which is smaller than or equal to
$\{s\theta,s'\theta\}$, and we can again
continue as in part (1).
\qed
\end{proof}

\begin{lem}
\label{lem:eq-fact-reduces-nh}%
Let $\iota$ be a ground \textsc{Equality Factoring} inference
with premise $(C \cdot \theta)$.
Then $\concl(\iota)$ is $\ngt_R$-smaller than
$(C \cdot \theta)$.
\end{lem}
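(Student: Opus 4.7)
My plan is to mirror the proofs of Lemmas~\ref{lem:par-sup-reduces-nh} and \ref{lem:eq-res-reduces-nh}: I will show that the $R$-normalization multiset of the premise strictly dominates that of the conclusion under the twofold multiset ordering $({\succ}_{\mul})_{\mul}$, which by the first component of the lexicographic definition of $\ngt_R$ already yields $(C \cdot \theta) \ngt_R \concl(\iota)$.

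Writing $C^* = C' \lor s' \noteq r' \lor r \eq r'$, so that $\concl(\iota) = (C^*\sigma \cdot \theta)$, I would first reduce to comparing the premise with the ``unsubstituted'' closure $(C^* \cdot \theta)$: by Lemma~\ref{lem:ordering-of-instances-nh}, if $C^*$ and $C^*\sigma$ are not equal up to bijective renaming then $(C^* \cdot \theta) \ngt_R (C^*\sigma \cdot \theta)$, and otherwise the two closures coincide. So in either case it suffices to show $(C \cdot \theta) \ngt_R (C^* \cdot \theta)$.

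For that core comparison, the contributions of $C'$ and of the common positive literal $r \eq r'$ to $\nm_R$ are identical on both sides. The premise contributes additionally the two-element multiset $\{s\theta, s'\theta\}$ from the positive literal $s \eq s'$; the clause $C^*$ contributes instead, from the new negative literal $s' \noteq r'$, the redex multisets $\rrm_R(\cdot)$ for the elements of $\ulss^{-}(C^*)$ coming from subterms of $s'$ and $r'$ that were not already present on the premise side, together with the topterm multisets $\{s'\theta{\downarrow}_R, s'\theta{\downarrow}_R\}$ and $\{r'\theta{\downarrow}_R, r'\theta{\downarrow}_R\}$. Since reducts and redex left-hand sides encountered during $R$-normalization of a ground term $u$ are all $\preceq u$, every term in any of these new multisets lies below $\max(s'\theta, r'\theta)$.

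The delicate step is the bound $r'\theta \preceq s'\theta$, which will follow from the maximality of $(s \eq s')\theta$ in $C\theta$ combined with $r\theta = s\theta$ and $s\theta \succ s'\theta$: the literal-ordering inequality $\{s\theta, s'\theta\} \succeq_{\mul} \{s\theta, r'\theta\}$, after cancelling the shared $s\theta$, forces $s'\theta \succeq r'\theta$. Once that bound is in hand, each new multiset in $\nm_R(C^* \cdot \theta)$ has maximum $\prec s\theta$ and is therefore strictly $\succ_{\mul}$-smaller than $\{s\theta, s'\theta\}$; Dershowitz's characterization of the multiset extension then gives $\nm_R(C \cdot \theta) \succ \nm_R(C^* \cdot \theta)$. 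I expect that extracting $r'\theta \preceq s'\theta$ from the maximality condition, rather than any multiset calculation per se, will be the main subtlety.
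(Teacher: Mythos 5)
Your proposal is correct and follows essentially the same route as the paper's own proof: remove the contribution $\{s\theta,s'\theta\}$ of the maximal positive literal, observe that the only possibly new contributions in $\nm_R$ are two-element multisets $\{u,u\}$ with $u \preceq \max(s'\theta,r'\theta) \prec s\theta$ (using the bound $r'\theta \preceq s'\theta$ extracted from maximality of $(s\eq s')\theta$ together with $r\theta=s\theta$ and $s\theta\succ s'\theta$), conclude a strict decrease in the first lexicographic component, and finally dispose of the $\mgu$ $\sigma$ via the instance ordering lemma. The only noteworthy difference is cosmetic: you cite Lemma~\ref{lem:ordering-of-instances-nh}, whereas the paper refers to Lemma~\ref{lem:ordering-of-instances}; the former is in fact the appropriate reference here.
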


\begin{proof}
We have $C = C' \lor {r} \eq r' \lor {s} \eq s'$
and
$\concl(\iota) = ((C' \lor s' \noteq r' \lor r \eq r')\sigma \cdot \theta)$,
where $s\theta = r\theta$,
$\sigma = \mgu(s {\doteq} r)$,
$s\theta \succ s'\theta$,
and $s\theta \eq s'\theta$ is a maximal literal in $C\theta$.
From this we can conclude that
$r'\theta \preceq s'\theta$.
In the inference, the multiset $\{s\theta,s'\theta\}$
in $\nm_R(C\cdot\theta)$ that stems from the literal
$s \eq s'$
is removed;
instead multisets of the form
$\{u,u\}$ that stem from the redexes and normal forms
of $s'\theta$ and $r'\theta$ may be added.
Since both $s'\theta$ and $r'\theta$ are smaller than $s\theta$,
every term $u$ is also smaller than $s\theta$.
Therefore all the potentially added multisets are smaller than the
removed one, thus
$(C \cdot \theta) \ngt_R (C' \lor s' \noteq r' \lor r \eq r' \cdot \theta)$.
This implies
$(C \cdot \theta) \ngt_R \concl(\iota)$
by Lemma~\ref{lem:ordering-of-instances}.
\qed
\end{proof}

Redundant closures and inferences are defined in the same way as in
Def. \ref{dfn:redundancy-clo} and~\ref{dfn:redundancy-inf}.\footnote{%
  If one uses \textsc{Merging Paramodulation} instead of
  \textsc{Equality Factoring},
  condition (iii) of Def.~\ref{dfn:redundancy-inf}
  must hold also for \textsc{Merging Paramodulation} inferences.}

In the construction of a candidate interpretation,
we define again $R_s = \bigcup_{t \prec s} E_t$
for every ground term $s$
and $R_* = \bigcup_t E_t$.
We define $E_s = \{ s \to s' \}$,
if $(C \cdot \theta)$ is the $\ngt_{R_s}$-smallest
closure in $N$ such that
$C = C' \lor u \eq u'$,
$u\theta \eq u'\theta$ is a strictly maximal literal in $C\theta$,
$s = u\theta$, $s' = u'\theta$,
$s \succ s'$,
$s$ is irreducible \wrt~$R_s$,
$C\theta$ is false in $R_s$, and
$C'\theta$ is false in $R_s \cup \{s \to s'\}$,
provided that such a closure $(C \cdot \theta)$ exists.
If no such closure exists, we define $E_s = \emptyset$.

With these definitions,
Lemmas~\ref{lem:modconstr-monotonic-1}--\ref{lem:productive-Dtheta-smaller-Ctheta-top-pos}
continue to hold.
We can then reprove Thm.~\ref{thm:rstar-is-model}
for the non-Horn case.
The only difference in the proof is one additional subcase
before Case 2.1:
If $s\theta \eq s'\theta$ is maximal, but not strictly maximal
in $C\theta$,
or if $C'\theta$ is true in $R_{s\theta} \cup \{s\theta \to s'\theta\}$,
then there is an \textsc{Equality Factoring}
inference with the premise $(C\cdot\theta)$.
This inference must be redundant, which yields again a contradiction.

The lifting to non-ground clauses works as in
Sect.~\ref{ssect:lifting}.

\section{Discussion}
\label{sect:disc}%

We have demonstrated that the naive addition of
\textsc{Destructive Equality Resolution} (\DER{})
to the standard abstract redundancy
concept destroys the refutational completeness of the calculus,
but that there exist restricted variants of
the Superposition Calculus that are refutationally complete even with
\DER{}
(restricted to negative literals in the non-Horn case).
The key tool for the completeness proofs is
a closure ordering that is structurally very different
from the classical ones~-- it is not a multiset extension of some
literal ordering~-- but that still has the property that
the redundancy criterion induced by it is compatible with
the Superposition Calculus.

Of course there is a big gap between the negative result and
the positive results. The new redundancy criterion justifies
\DER{} as well as most deletion and
simplification techniques found in realistic saturation provers,
but only propositional subsumption
and only a very restricted variant of demodulation,
The question whether the Superposition Calculus is
refutationally complete together with a redundancy criterion
that justifies both \DER{}
(in full generality even in the non-Horn case)
and \emph{all} deletion and
simplification techniques found in realistic saturation provers
(including unrestricted demodulation and first-order subsumption)
is still open.
Our work is an intermediate step towards a solution to this problem.
There may exist a more refined closure ordering that
allows us to prove the completeness of such a calculus.
On the other hand, if the combination is really incomplete,
a counterexample must make use of those operations
that our proof fails to handle, that is,
\DER{} in positive literals in non-Horn problems,
first-order subsumption, or
demodulation with unit equations that are contained in the
usual term ordering $\succ$ but yield closures that are
larger \wrt~$\ngt_{R_*}$.

\subsubsection{Acknowledgments.}

I thank the anonymous \acr{IJCAR} reviewers for their helpful comments
and Stephan Schulz for drawing my attention to the problem
at the \acr{CASC} dinner in 2013.

\bibliography{paper}

\end{document}